\newtheorem{theorem}{Theorem}
\newtheorem{corollary}{Corollary}[theorem]
\newtheorem{lemma}[theorem]{Lemma}
\def\makeLineNumberLeft{%
  \linenumberfont\llap{\hb@xt@\linenumberwidth{\LineNumber\hss}\hskip\linenumbersep}
  \hskip\columnwidth
  \rlap{\hskip\linenumbersep\hb@xt@\linenumberwidth{\hss\LineNumber}}\hss}
\begin{document}

\title{Degree Ranking Using Local Information}


%


\author{%
  Akrati Saxena*\\akrati.saxena@iitrpr.ac.in
  \and Ralucca Gera**\\rgera@nps.edu
  \and S. R. S. Iyengar*\\sudarshan@iitrpr.ac.in
  \and *Department of Computer Science and Engineering,\\ Indian Institute of Technology Ropar, India
  }
\date{%
  **Department of Applied Mathematics\\
Naval Postgraduate School, \\Monterey, CA 93943 USA
}

\maketitle
\begin{abstract}
Most real world dynamic networks are evolved very fast with time. It is not feasible to collect the entire network at any given time to study its characteristics. This creates the need to propose local algorithms to study various properties of the network. In the present work, we estimate degree rank of a node without having the entire network. The proposed methods are based on the power law degree distribution characteristic or sampling techniques. The proposed methods are simulated on synthetic networks, as well as on real world social networks. The efficiency of the proposed methods is evaluated using absolute and weighted error functions. Results show that the degree rank of a node can be estimated with high accuracy using only $1\%$ samples of the network size. The accuracy of the estimation decreases from high ranked to low ranked nodes. We further extend the proposed methods for random networks and validate their efficiency on synthetic random networks, that are generated using Erd\H{o}s-R\'{e}nyi model. Results show that the proposed methods can be efficiently used for random networks as well.
\end{abstract}

\section{Introduction}

In complex networks, all nodes have unique characteristics that can be captured using several of the centrality measures proposed in the literature like degree centrality \cite{shaw1954some}, semi-local centrality \cite{chen2012identifying}, closeness centrality \cite{sabidussi1966centrality}, betweenness centrality \cite{freeman1977set}, eigenvector centrality \cite{stephenson1989rethinking}, Katz centrality \cite{katz1953new}, PageRank \cite{brin1998anatomy}, and so on. These centrality measures assign a value to each node based on its importance in the given context. But, in real life applications, we are mostly interested in the relative importance of the node with respect to the top ranked node. It can be measured using the rank of the node based on the given centrality measure. In the present work, we estimate degree rank of a node using local information. The degree of a node $u$ is denoted by $d_u$, which represents the number of neighbors of the node. Degree rank of a node $u$ is defined as, $R_{act}(u) = \sum_{v} X_{uv} + 1$, where $X_{uv} = 1,$ if $d_v > d_u$, otherwise $X_{uv} = 0$. It has been referred as actual degree rank throughout the paper. A node having the highest degree is ranked $1$. All nodes having the same degree will have the same rank.

The classical ranking method collects the degree of all nodes and compares them to compute the rank of a node. If the degree of a node can be computed in $O(1)$ time, then the time complexity of this method is $O(n)$. The space complexity is also high as it requires entire network for the processing. It is not feasible to collect the entire network and store and process it for the large scale networks. Due to this very reason, this method is not feasible for large-scale or distributed real world networks.

Real world networks are highly dynamic, so, the rank of a node keeps on changing with time. To estimate the latest rank of a node, the current snapshot of the entire network is required. Even if we collect this dataset, it might not be useful for further estimations. So, the complexity of pre-processing will be very high. There are some more constraints while studying these networks like online social networks can only be accessed using public interface calls or their API. The number of calls is constant due to API's restrictions. These networks can only be sampled using random walk or its variants like weighted random walk, metropolis-hastings random walk, and so on. It creates the need to propose efficient local algorithms to estimate various properties of the network using less amount of data.

In the present work, we propose four methods to estimate degree rank of a node without having the entire network. These methods use a small snapshot of the network that is collected using different sampling techniques. The proposed methods require some network parameters like network size, maximum, minimum or average degree of the network, that are estimated using random walk samples. Once these pre-processing steps are done, the degree rank of a node can be estimated using the proposed methods. The first method uses power law degree distribution characteristic of real world scale-free networks \cite{barabasi1999emergence} and estimates the degree rank of a node in $O(1)$ time. The next method uses the uniform sampling technique to collect the samples. It computes the local rank of the node in the collected samples, that is extrapolated to estimate its global rank. The last two methods use metropolis-hastings random walk and classical random walk to collect the samples for the rank estimation.

We further study the accuracy and efficiency of the proposed methods on synthetic as well as on real world networks. The accuracy is measured using absolute and weighted error functions. Results show that the proposed methods estimate rank of a node with high accuracy just by using $1\%$ samples of the network size. So, these methods can be used efficiently for online social networks. The proposed methods are verified on 20 real world social networks and the detailed comparison of these methods is discussed in the Results section.

These ranking methods are further extended to rank nodes in random networks. The efficiency of the proposed methods is verified on random networks of 100,000-500,000 nodes. Results show that the degree rank of a node can be estimated efficiently using a small ($1\%$ nodes) sample size.

As per the best of our knowledge, this is the first extensive study of its kind. This work can be helpful to make progress in other domains like identification of influential nodes, comparison of the relative importance of nodes, etc. Identification of influential nodes has been the center of various other research problems like an epidemic \cite{hu2009simulation}, viral marketing \cite{leskovec2007dynamics}, information diffusion \cite{gupta2016modeling, saxena2015understanding}, opinion formation \cite{watts2007influentials}, and so on. It has attracted researchers for quite a long time. The proposed methods can be used to rank influential nodes in different contexts. Degree centrality also has been combined with many other centrality measures to identify the influential nodes \cite{chen2012identifying, hou2012identifying, yu2015node}. Fortunato et al. studied the correlation of in-degree with PageRank of the node \cite{fortunato2006approximating}. They show that the PageRank is directly proportional to the in-degree, modulo an additive constant. Ghoshal and Barabasi studied the dependency of super stable nodes on their degrees \cite{ghoshal2011ranking}. All these applications show the importance of degree ranking in diverse domains of science. As the network size is increasing very fast with time, it is not feasible to implement regular methods. So, local algorithms need to be used in such scenarios as they are practical for large-scale dynamic networks. In \cite{saxena2017afaster}, authors propose fast heuristic methods to estimate the rank of the nodes based on the closeness centrality that itself is a global centrality measure.


The rest of this paper is organized as follows. Next, we discuss related work. In section 3, we discuss methods that are used to estimate the required network parameters. In section 4, all notation that will be used in the paper, are explained. Section 5 describes degree rank estimation methods for scale-free networks. Each of its subsection explains one method in depth. Section 6 explains datasets, error functions, and simulation results for all the proposed methods. Section 7 explains ranking methods for random networks and their validation on Erdos-Renyi networks. The paper is concluded in section 8. This project has various future directions that can be explored further. These are also discussed in the conclusion.

\section{Related Work}

Real world networks are highly dynamic. Their size is increasing very fast with time and in many cases, they are stored in a decentralized way. It is not feasible to store the entire network to study its characteristics like network size, average degree, clustering coefficient, and so on. A small snapshot of the dataset can be collected at any given time to study network characteristics. This has motivated researchers to use sampling techniques to study network parameters. While sampling, the main focus is that the collected dataset should be a good representative of the complete dataset.

The sampling techniques can be mainly categorized as node selection based sampling techniques \cite{leskovec2006sampling}, edge selection based sampling techniques \cite{leskovec2006sampling}, and graph traversal based sampling techniques. In node selection or edge selection methods, nodes or edges are sampled uniformly at random from the network respectively. Haralabopoulos and Anagnostopoulos proposed Enhanced Random Node Sampling method and compared its efficiency with already existing methods \cite{haralabopoulos2014real}. The paper contains the results of the estimation of network parameters like clustering coefficient, average degree, assortativity, and the number of components in real world networks.

The node or edge sampling methods are not feasible in real world networks as the structure of social networks is not known in advance. So, these networks can be sampled using graph traversal techniques like breadth first search (BFS) \cite{even2011graph}, depth first search (DFS) \cite{even2011graph}, forest fire sampling (FFS) \cite{leskovec2006sampling}, snowball sampling \cite{goodman1961snowball}, or random walk based methods like simple random walk (RW) \cite{lovasz1993random}, Metropolis Hastings random walk (MHRW) \cite{metropolis1953equation}, reweighted random walk (RWRW) \cite{hansen1943theory}, respondent driven sampling (RDS) \cite{salganik2004sampling}, supervised random walk \cite{backstrom2011supervised}, Modified TOpology Sampling (MTO) \cite{zhou2016faster}, walk-estimate \cite{nazi2015walk}, Frontier sampling (m-dimensional random walk) \cite{ribeiro2010estimating}, Rank Degree sampling based on edge selection \cite{voudigari2016rank}, preferential random walk \cite{davis2016marginal}, and so on.

Next, we discuss estimation methods for network parameters using sampling methods.
Kurant et al. proposed a method called SafetyMargin that uses Induced Edges sampling techniques to estimate the network size \cite{kurant2012graph}. The proposed method outperforms state of the art methods even using 10 times small sample size. In 2013, Hardiman and Katzir proposed a more efficient method to estimate the network size using random walk samples \cite{hardiman2013estimating}. This method is discussed in more detail in Section 3.1. They also proposed methods to compute average clustering coefficient and global clustering coefficient of the network using random walk. There have been proposed some more methods to estimate network size like \cite{cem2016estimation, musco2016ant, lucchese2015networks, chen2016estimating, ye2011estimating}.


Sampling techniques also have been used to identify degree related properties like high degree nodes, average degree, or degree distribution of the network. Cooper et al. proposed a biased random walk method to identify high degree nodes in the scale-free networks \cite{cooper2012fast}. Marchetti-Spaccamela proposed a method to estimate the degree of a node in directed network \cite{marchetti1988estimate}. Dasgupta et al. proposed a method to estimate the average degree of the network using smooth random walk, that is discussed in Section 3.2 \cite{dasgupta2014estimating}. Eden et al. proposed an algorithm to estimate the average degree using $\tilde{O}(1)$ queries \cite{eden2016sublinear}. There have been proposed some more methods to estimate the average degree \cite{cem2016estimation, lu2012sampling}. Cem and Sarac proposed methods to estimate the size and the average degree of online social networks where only one random neighbor of the node can be accessed using API calls \cite{cem2015estimating}. They further used ego-centric sampling and showed that the use of neighborhood information is not always beneficial to estimate network parameters like network size and average degree \cite{cem2016average}.

Ribeiro et al. studied the mean square error while computing the degree distribution of the network \cite{ribeiro2012estimation}. They further compute the normalized mean square error for estimating the out-degree and in-deg distribution of the directed networks \cite{ribeiro2012sampling}. The proposed method uses Directed Unbiased Random Walk (DURW) that takes a random jump with a fixed probability depending on the degree of the node while taking the walk. The results show that the out-degree distribution can be estimated more efficiently and accuracy of the in-degree distribution is very less unless the graph is not symmetric.

Thus, we have seen that the sampling methods can be used to estimate various network parameters. In the present work, we use sampling techniques to estimate degree centrality rank of the nodes. 

\section{Estimation of Network Parameters}

Different ranking methods require different network parameters that need to be estimated during the preprocessing steps. In this section, we discuss the methods that are used to estimate these parameters.

\subsection{Estimate the Network Size}

The network size is estimated using a method (HK method) that was proposed by Hardiman and Katzir \cite{hardiman2013estimating}. The proposed estimator is based on the concept of collision to count total number of nodes, where samples are collected using the classical random walk. Authors use neighbors' information of the sampled nodes to detect the collision a step before it actually occurs. There is a high probability of collision on short distances due to the local traversal. So, a pair of nodes in random walk samples is considered to count the collision if their distance is long during the random walk. We have considered a pair if their distance is more than $2.5\%$ of sample size. It is the same as taken by the authors.

\subsection{Estimate the Average degree}

The average degree of the network is used while estimating the rank of a node using power law degree distribution. It is estimated using a method (AD method) that was proposed by Dasgupta et al. \cite{dasgupta2014estimating}. The samples are collected using smoothed random walk with a distribution $D_{d,c}$, where the probability to sample a node $u$ is directly proportional to $d_u + c$, and $c$ is constant during the entire random walk. The samples generated using this distribution are equivalent to samples generated from the network, where $c/2$ self-loops are added to each node.

They proposed two estimators called: 1. Guess$\&$Smooth, 2. Smooth. The optimal value of $c$ is decided using Guess$\&$-Smooth estimator. This smoothing parameter $c$ is used by Smooth estimator to collect the samples. These two estimators are combined to propose an estimator that takes $O(logU \cdot loglogU)$ samples to estimate average degree with high accuracy, where $U$ is an upper bound on the maximum degree of the network. Thus, this estimator takes very less number of samples to estimate the average degree.

The estimated average degree of the network is required while estimating the rank of a node using power law degree distribution method. 


\subsection{Estimate the Maximum Degree}

In power law degree distribution, the frequency of highest degree node is almost 1. In the analysis, maximum degree is estimated as the available maximum degree in the samples, $d'_{max}= max\left\lbrace d_u, \forall u \in S\right\rbrace$, where $S$ is the set of samples.

\subsection{Estimate the Minimum Degree}

In real world networks, we observe that the minimum degree is 1 or close to 1. We use the same value of minimum degree for the analysis, so $d'_{min} = 1$. In BA and ER networks, the minimum degree is estimated as the available minimum degree in the samples. 

The minimum and maximum degree are estimated using the network size estimator samples.

\section{Notation}\label{sec:notation}

$\mathcal{G}(f)$ represents a set of networks having $n$ nodes, and all networks are generated using the same degree distribution $f$. Table~\ref{my-label12} contains all notation used in the paper.

\begin{table}[htp]
\centering
\caption{Notation}
\label{my-label12}
\resizebox{\columnwidth}{!}{%
\begin{tabular}{|l|l|}
\hline
\textbf{Notation} & \textbf{Description} \\ \hline
$G$ & A network, $G \in \mathcal{G}(f)$  \\ \hline
$n$ & Total number of nodes in the network \\ \hline
$m$ & Total number of edges in the network \\ \hline
$n'$ & Estimated number of nodes in the network \\ \hline
$n_j$ & Total number of nodes having degree $j$ in the network \\ \hline
$u,v,w$ & Nodes in the network  \\ \hline
$d_u$  & Degree of node $u$  \\ \hline
$d_{max}$  & Maximum degree in the network  \\ \hline
$d_{min}$  & Minimum degree in the network  \\ \hline
$d_{avg}$  & Average degree of the network  \\ \hline
$S$ & Set of sampled nodes  \\ \hline
$s$ & Sample size, $s=|S|$ \\ \hline
$d'_{max}$  & Estimated maximum degree/maximum degree in $S$  \\ \hline
$d'_{min}$  & Estimated minimum degree/minimum degree in $S$ \\ \hline
$d'_{avg}$  & Estimated average degree/average degree of $S$  \\ \hline
$R_{act}(u)$  & Actual rank of node $u$ in the network \\ \hline
$R_{est}(u)$  & Estimated rank of node $u$ in the network  \\ \hline
$R_{local}(u)$  & Rank of node $u$ in sample $S$ \\ \hline
$R_{G}(u)$  & A random variable that denotes the rank of node $u$ in $G$ \\ \hline
$R_{S}(u)$  & A random variable that denotes the rank of node $u$ in $S$ \\ \hline
\end{tabular}
}
\end{table}

\section{Estimate the Degree Rank}

In this section, we will explain four methods to estimate degree rank of a node using local information. 

\subsection{Using Power Law Degree Distribution (PL Method)}

In 1999, Barabasi and Albert observed that degree distribution $f$ of real world scale-free networks follows power law \cite{barabasi1999emergence}. The probability $f(j)$ of a node having degree $j$ is given as $f(j) = cj^{-\gamma}$, where $c$ and $\gamma$ are constants for a network. Due to the power law characteristic, only a few nodes manage to get the higher degree in the network. In real world scale-free networks, the range of the exponent is $2 < \gamma < 3$. The degree rank of a node can be computed using power law equation if its parameters are known. In this section, we propose a method to estimate these parameters that can be used further to estimate degree rank of the node.


\begin{theorem}
In a scale-free network $G$ $(G \in \mathcal{G}(f))$, the power law exponent of degree distribution can be computed as, $\gamma \approx 2 + \frac{d_{min}}{d_{avg}-d_{min}}$, where $d_{min}$ and $d_{avg}$ represent minimum and average degree of the network respectively.
\end{theorem}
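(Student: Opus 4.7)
The plan is to treat the power-law degree distribution as a continuous density on $[d_{min}, \infty)$ and extract $\gamma$ from the normalization condition together with the first moment. Because $2 < \gamma < 3$ for real-world scale-free networks, both the normalization integral and the mean integral converge at infinity, so using $\infty$ as the upper limit introduces only a small error (this is where the $\approx$ enters).

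First I would normalize. Writing $f(j) = c j^{-\gamma}$ for $j \ge d_{min}$ and imposing $\int_{d_{min}}^{\infty} c j^{-\gamma}\, dj = 1$, a direct computation yields
\begin{equation*}
c = (\gamma - 1)\, d_{min}^{\,\gamma - 1}.
\end{equation*}
Next I would compute the expected degree as the first moment of $f$:
\begin{equation*}
d_{avg} \;=\; \int_{d_{min}}^{\infty} j \cdot c\, j^{-\gamma}\, dj \;=\; c \cdot \frac{d_{min}^{\,2-\gamma}}{\gamma - 2} \;=\; \frac{\gamma - 1}{\gamma - 2}\, d_{min}.
\end{equation*}

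The final step is algebraic: solve $d_{avg}(\gamma - 2) = (\gamma - 1)\, d_{min}$ for $\gamma$, obtaining
\begin{equation*}
\gamma \;=\; \frac{2 d_{avg} - d_{min}}{d_{avg} - d_{min}} \;=\; 2 + \frac{d_{min}}{d_{avg} - d_{min}},
\end{equation*}
which is the claimed expression.

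The only real modelling obstacle is justifying the continuous-and-unbounded approximation. For a finite scale-free network the correct normalization would involve a sum from $d_{min}$ up to $d_{max}$ rather than an integral to $\infty$; but since $\gamma > 2$ the tail contribution $\int_{d_{max}}^{\infty} j \cdot c j^{-\gamma}\, dj$ is of order $d_{max}^{\,2-\gamma}$, which vanishes as $d_{max}$ grows, and the difference between the Riemann sum and the integral is a lower-order correction. I would note this is precisely why the theorem is stated with ``$\approx$'' rather than equality. No deeper obstacle arises; the result is a one-line consequence of matching the first moment of a truncated power law to the empirical average degree.
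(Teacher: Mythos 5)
Your proposal is correct and follows essentially the same route as the paper: normalize the continuous power-law density, compute the first moment to get $d_{avg} \approx \frac{\gamma-1}{\gamma-2}\,d_{min}$, and solve for $\gamma$. The only cosmetic difference is that you send the upper limit to $\infty$ at the outset, whereas the paper integrates up to $d_{max}$ and then discards the $d_{max}$-dependent terms using $d_{min} \ll d_{max}$ and $2 < \gamma < 3$ --- the same tail-negligibility approximation applied at a different point.
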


\begin{proof}
%

Let network $G$ follows power law degree distribution $f(j)= cj^{-\gamma}$.
First, we derive an equation to estimate the value of $c$. The sum of probabilities of a node having degree $j$ ($d_{min} \le j \le d_{max}$) is equal to $1$. The probability function of degree distribution can be written as,
\begin{center}
\begin{align*}
\sum_{j=d_{min} }^{d_{max} } f(j) = 1.
\end{align*}
\end{center}
We switch to integration\footnote{Here, discrete probability values are considered as continuous probability density function, as this introduces a very small error.} to compute $c$: 

\begin{align*}
\int_{d_{min} }^{d_{max} } f(j) dj &= 1 ,\\
\int_{d_{min} }^{d_{max} } c \cdot j^{-\gamma} dj &= 1.
\end{align*}
After integration, we obtain the value for $c$ to be
\begin{equation*}
c \cdot \frac{ (d_{max})^{1-\gamma} - (d_{min})^{1-\gamma} }{ 1 - \gamma} = 1\\
\end{equation*}
\begin{equation}\label{eq:c}
c =  \frac{1- \gamma}{(d_{max})^{1-\gamma} - (d_{min})^{1-\gamma}}.
\end{equation}

\noindent
To compute $\gamma$, the average degree of the network, $(d_{avg})$, is used. Using $f(j)= c \cdot j^{-\gamma}$, it can be computed as

\begin{align*}
d_{avg} &= \sum_{j=d_{min} }^{d_{max} } j \cdot f(j)\\
d_{avg} &= \int_{d_{min} }^{d_{max} } j \cdot \left( c \cdot j^{-\gamma}\right) dj.
\end{align*}

\noindent
After integration, we have that
\begin{center}
$d_{avg} = c  \cdot \frac{ d_{max}^{2- \gamma} - d_{min}^{2- \gamma}}{2 - \gamma}.$
\end{center}

Putting value of $c$ from equation \eqref{eq:c} in this equation,
\begin{align*}
d_{avg} &= \frac{1- \gamma}{2 - \gamma} \cdot \frac{ d_{max}^{2-\gamma} - d_{min}^{2-\gamma}}{ d_{max}^{1-\gamma} - d_{min}^{1-\gamma}}\\
d_{avg} &= \frac{\gamma -1}{\gamma-2} \cdot \frac{ d_{max}^{\gamma-2} - d_{min}^{\gamma-2}}{ d_{max}^{\gamma-1} - d_{min}^{\gamma-1}} \cdot d_{max} \cdot d_{min}
\end{align*}
where, $d_{min} << d_{max}$, and $2 < \gamma < 3$ for scale-free real networks~\cite{barabasi1999emergence}.

\begin{center}
$d_{avg} \approx \frac{\gamma-1}{\gamma-2} \frac{d_{max}^{\gamma-2}}{d_{max}^{\gamma-1}} \cdot d_{max} \cdot d_{min}$\\
\end{center}
\begin{center}
$d_{avg} \approx \frac{\gamma-1}{\gamma-2} \cdot d_{min}$\\
\end{center}
i.e. $ \gamma \approx 2 + \frac{d_{min}}{d_{avg}-d_{min}}$.
\end{proof}

We next present the expected degree rank of a node. 

\begin{theorem}\label{expected}
In a network $G$ $(G \in \mathcal{G}(f))$, the expected degree rank of a node $u$ can be computed as, $E[R_{G}(u)] \approx n \left( \frac{d_{max}^{1-\gamma} - (d_{u}+1)^{1-\gamma}}{ d_{max}^{1-\gamma} - d_{min}^{1-\gamma} } \right)  + 1,$ where $\gamma$ is the power law exponent of the degree distribution of network $G$.
\end{theorem}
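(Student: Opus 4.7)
The plan is to start from the definition $R_{act}(u)=\sum_v X_{uv}+1$ with $X_{uv}=\mathbf{1}[d_v>d_u]$, apply linearity of expectation, and reduce the problem to estimating the probability that a uniformly chosen node in $G$ has degree strictly greater than $d_u$. Writing $R_G(u)$ as the corresponding random variable over $G\in\mathcal{G}(f)$, linearity gives
\begin{equation*}
E[R_G(u)] \;=\; 1 + \sum_{v\neq u} \Pr[d_v>d_u] \;\approx\; 1 + n\cdot \Pr[d_v>d_u],
\end{equation*}
where the approximation absorbs the single missing term (and treats $v$ as drawn from the degree distribution $f$, consistent with the spirit of Theorem~1).

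Next I would evaluate $\Pr[d_v>d_u]$ using the power-law density $f(j)=c\,j^{-\gamma}$. Since $d_v>d_u$ means $d_v\ge d_u+1$, I would replace the discrete sum $\sum_{j=d_u+1}^{d_{max}} f(j)$ by the continuous integral $\int_{d_u+1}^{d_{max}} c\,j^{-\gamma}\,dj$, exactly as the authors did in the proof of Theorem~1 (the footnote there explicitly justifies this continuous approximation). Carrying out the integration gives
\begin{equation*}
\Pr[d_v>d_u] \;\approx\; c\cdot \frac{d_{max}^{1-\gamma}-(d_u+1)^{1-\gamma}}{1-\gamma}.
\end{equation*}

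Then I would substitute the value of $c$ obtained in equation~\eqref{eq:c}, namely $c=(1-\gamma)/(d_{max}^{1-\gamma}-d_{min}^{1-\gamma})$. The factors of $1-\gamma$ cancel, yielding
\begin{equation*}
\Pr[d_v>d_u] \;\approx\; \frac{d_{max}^{1-\gamma}-(d_u+1)^{1-\gamma}}{d_{max}^{1-\gamma}-d_{min}^{1-\gamma}},
\end{equation*}
and plugging this back into the expression for $E[R_G(u)]$ produces the claimed formula.

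The routine parts here are the integration and the cancellation with $c$; the only subtle point is that the rank counts nodes of \emph{strictly} larger degree, which is why the lower limit of integration must be $d_u+1$ rather than $d_u$. I would be explicit about this when switching from the sum over $j\ge d_u+1$ to the integral, since it is the single place where an off-by-one slip would corrupt the final formula. The other minor caveat — summing over $v\neq u$ versus all $n$ nodes, and drawing $v$ from $f$ rather than from the realized degree sequence — is exactly the kind of $O(1)$ error already absorbed by the $\approx$ sign and by the continuous approximation used in Theorem~1.
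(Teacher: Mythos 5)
Your proposal is correct and follows essentially the same route as the paper: both arguments reduce $E[R_G(u)]$ to $1+n\cdot\sum_{j=d_u+1}^{d_{max}}f(j)$ (the paper via the degree-count variables $N_j$ with $E[N_j]=nf(j)$, you via per-node indicators, which is the same computation), then replace the sum by the integral $\int_{d_u+1}^{d_{max}}c\,j^{-\gamma}\,dj$ and substitute $c$ from equation~\eqref{eq:c}. Your explicit attention to the lower limit $d_u+1$ matches the paper's treatment, so there is nothing to flag.
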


\begin{proof}

In a given network $G$, the actual rank of a node $u$ having degree $d_u$ can be computed as,

\begin{center}
\begin{align*}
R_{act}(u) = \sum_{j=d_{u}+1}^{d_{max}}n_j  + 1
\end{align*}
\end{center}
where $n_j$ represents total number of nodes having degree $j$ in network $G$.  Let  $N_j$ be a random variable that represents total number of nodes having degree $j$ in $G$. 
Then, the expected value of $N_j$ can be computed as, $E[N_j]=n\cdot f(j)$.
Thus  the expected degree rank of a node $u$ can be computed as
\begin{align*}
E[R_{G}(u)] &= E\left[ \sum_{j=d_{u}+1}^{d_{max}}N_j  + 1\right] \\
E[R_{G}(u)] &= \sum_{j=d_{u}+1}^{d_{max}}E[N_j]  + 1\\
E[R_{G}(u)] &= \sum_{j=d_{u}+1}^{d_{max}}n\cdot f(j) + 1\\
E[R_{G}(u)] &\approx n \int_{d_{u}+1}^{d_{max}}f(j) dj +1.
\end{align*}

\noindent
Since  $f(j) = cj^{-\gamma}$, after the integration of
$E[R_{G}(u)] \approx n \int_{d_{u}+1}^{d_{max}} c \cdot j^{-\gamma} dj+1$
 we have
$$E[R_{G}(u)] \approx n c \frac{d_{max}^{1-\gamma} - (d_{u}+1)^{1-\gamma}}{ {1 - \gamma} } + 1.$$

\noindent
Replacing the value of $c$ from equation \eqref{eq:c}, we obtain
$$E[R_{G}(u)] \approx n( \frac{d_{max}^{1-\gamma} - (d_{u}+1)^{1-\gamma}}{ d_{max}^{1-\gamma} - d_{min}^{1-\gamma} } ) + 1$$
as desired.~\end{proof}

And so, using Theorem~\ref{expected} and given general estimators about the network, we can estimate the degree rank of nodes.
\begin{corollary}\label{estimated}
In a network $G$ $(G \in \mathcal{G}(f))$, the degree rank of a node $u$ can be estimated as, $$R_{est}(u) = n'\left(\frac{{(d'_{max})}^{1-\gamma} - (d_{u}+1)^{1-\gamma}}{ {(d'_{max})}^{1-\gamma} - {(d'_{min})}^{1-\gamma} } \right)  + 1,$$ where $\gamma = 2 + \frac{d'_{min}}{d'_{avg}-d'_{min}}$, and  $n', d'_{min}, d'_{max}$, and $d'_{avg}$ denote the estimated value of network size, minimum degree, maximum degree, and average degree of the network respectively.  
\end{corollary}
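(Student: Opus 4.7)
The plan is to treat this corollary as a plug-in estimator argument: Theorem~\ref{expected} gives a closed-form expression for $E[R_G(u)]$ in terms of the unknown global quantities $n$, $d_{max}$, $d_{min}$, and $\gamma$, and Theorem~1 in turn expresses $\gamma$ in terms of $d_{min}$ and $d_{avg}$. So every parameter on the right-hand side of Theorem~\ref{expected} is one of the four network statistics $\{n, d_{max}, d_{min}, d_{avg}\}$, each of which the paper has already supplied a sampling-based estimator for in Section~3. The definition of $R_{est}(u)$ will simply substitute each true parameter with the corresponding estimator.

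Concretely, I would proceed in three short steps. First, recall the formula
$$E[R_G(u)] \approx n\left(\frac{d_{max}^{1-\gamma}-(d_u+1)^{1-\gamma}}{d_{max}^{1-\gamma}-d_{min}^{1-\gamma}}\right)+1$$
from Theorem~\ref{expected}, noting that $d_u$ is the only quantity that is observed directly (as a local degree of the node of interest) rather than estimated globally. Second, invoke Theorem~1 to rewrite the exponent as $\gamma = 2 + d_{min}/(d_{avg}-d_{min})$, so that $\gamma$ is also a deterministic function of quantities for which Section~3 provides estimators. Third, by the plug-in principle, replace $(n, d_{max}, d_{min}, d_{avg})$ by their estimators $(n', d'_{max}, d'_{min}, d'_{avg})$ obtained from the HK network-size estimator (Section~3.1), the Dasgupta et al.\ average-degree estimator (Section~3.2), and the sample-based extremes (Sections~3.3--3.4). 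Defining $\gamma = 2 + d'_{min}/(d'_{avg}-d'_{min})$ and substituting yields exactly the claimed expression for $R_{est}(u)$.

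The step that deserves the most care is really a modeling remark rather than a calculation: the corollary is stated as an estimator, so what must be justified is that the substitution is consistent, i.e.\ that $R_{est}(u)$ is a well-defined function of the sampled quantities, and that no additional randomness beyond $n'$, $d'_{max}$, $d'_{min}$, $d'_{avg}$ enters. Since the map from these four statistics to the right-hand side is continuous (away from the degenerate case $d'_{avg}=d'_{min}$, which does not arise in scale-free regimes with $2<\gamma<3$), substitution is immediate and the corollary follows from Theorem~\ref{expected} with no further integration or asymptotics. I would not attempt here to propagate error bounds from the individual estimators through the nonlinear formula; that would be a quantitative statement beyond the scope of the corollary and is instead addressed empirically in the Results section.
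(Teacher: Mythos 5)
Your proposal is correct and matches the paper's own (implicit) justification: the corollary is obtained purely by the plug-in principle, substituting the estimators $n', d'_{max}, d'_{min}, d'_{avg}$ from Section~3 into the formula of Theorem~\ref{expected}, with $\gamma$ computed from Theorem~1. The paper offers no further argument, so no gap exists between your reasoning and theirs.
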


In one of our previous works, we have validated this method on BA networks \cite{saxena2015rank, saxena2016estimating}. The proposed method estimates the rank with high accuracy for BA networks but does not give good results for real world networks, as they follow power law degree distribution with a droop head and a heavy tail. We further compute variance in degree rank estimation using power law degree distribution and the results show that there is a high variance for lower degree nodes \cite{saxena2015estimating}. 

Next, we propose few more sampling based approaches that perform better on real world networks. These are discussed below.

\subsection{Using Uniform Sampling (US Method)}

%

In this section, the uniform sampling technique is used to collect a small sample of actual dataset. In uniform sampling, the probability of sampling a node is equal to $1/n$, where $n$ is total number of nodes. Uniform samples preserve the characteristics of actual dataset. So, the collected samples follow similar degree distribution as observed in real world networks. Here we assume that the network $G$ is generated using degree distribution $f_1$ and $G \in \mathcal{G}(f_1)$. Now, Theorem 3 can be used to estimate the rank of a node using uniform samples. The expected global rank of a node can be estimated by extrapolating its local rank in the collected sample set.

\begin{theorem}
In a network $G$ $(G \in \mathcal{G}(f_1))$, if sample $S$ is collected uniformly, the expected local rank of node $u$ can be computed as, $E[R_{S}(u)] \approx \frac{s}{n} E[R_{G}(u)]$, where $R_{G}(u)$ and $R_{S}(u)$ are random variables that denote the rank of node $u$ in network $G$ and sample $S$ respectively.
\end{theorem}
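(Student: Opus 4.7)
The plan is to apply linearity of expectation after rewriting the local rank as a sum of sampling indicators. Fixing the network $G$, I would write
\[
R_S(u) = 1 + \sum_{v \in V(G),\, v \neq u} Y_v \cdot X_{uv},
\]
where $Y_v$ is the indicator that $v \in S$ and $X_{uv}$ is the indicator that $d_v > d_u$ (the same $X_{uv}$ already used in the introduction to define rank). Under uniform node sampling each $v$ satisfies $P(v \in S) = s/n$, and the events are either independent (with replacement) or weakly negatively correlated (without replacement); in both cases, linearity of expectation gives the conditional expectation
\[
E[R_S(u) \mid G] = 1 + \frac{s}{n} \sum_{v \neq u} X_{uv} = 1 + \frac{s}{n}\bigl(R_G(u) - 1\bigr).
\]

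Next I would absorb the additive constants into the approximation. Rearranging yields $E[R_S(u) \mid G] = \frac{s}{n} R_G(u) + \bigl(1 - s/n\bigr)$, and in the regime of interest ($s \ll n$ and $R_G(u)$ at least moderately large) the residual $1 - s/n$ is dominated by the leading term $\frac{s}{n}R_G(u)$, so $E[R_S(u) \mid G] \approx \frac{s}{n} R_G(u)$. Taking a further expectation over $G \in \mathcal{G}(f_1)$ on both sides and applying linearity once more gives the claimed relation $E[R_S(u)] \approx \frac{s}{n} E[R_G(u)]$.

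The main delicate point, and the only real obstacle I anticipate, is justifying the $\approx$ step along with the precise value of $P(v \in S)$. Strictly, if uniform sampling is interpreted as sampling without replacement and we condition on $u \in S$ (so that the local rank of $u$ in $S$ is even meaningful), the coefficient for $v \neq u$ becomes $(s-1)/(n-1)$ rather than $s/n$, and there is an additional $+1$ contribution from $u$ itself. Both conventions agree to leading order and differ only by $O(1/n)$ corrections that are absorbed into the approximation sign in the statement. I would handle this with a brief remark rather than a detailed calculation, since the theorem is explicitly stated as an approximation.
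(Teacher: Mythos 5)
Your proof is correct, and it reaches the paper's exact intermediate identity $E[R_S(u)] = \frac{s}{n}E[R_G(u)] + \frac{n-s}{n}$ before discarding the $O(1)$ term, but it gets there by a different decomposition. The paper works through the degree distribution: it writes $E[R_G(u)] = \sum_{j=d_u+1}^{d_{max}} n f_1(j) + 1$ via the random variables $N_j$ (number of nodes of degree $j$), defines $p$ as the probability that a uniformly drawn node has degree exceeding $d_u$, observes $E[R_G(u)] = pn+1$ using $\sum_j f_1(j)=1$, and then models the count of higher-degree nodes in $S$ as a $\mathrm{Binomial}(s,p)$ variable to get $E[R_S(u)] = sp+1$. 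You instead condition on the realized graph $G$ and decompose $R_S(u)$ into per-node indicators $Y_v X_{uv}$, which is more elementary: it needs no reference to the degree distribution $f_1$ at all (so the hypothesis $G \in \mathcal{G}(f_1)$ becomes superfluous, which is arguably a feature), and it cleanly separates the two sources of randomness (the graph ensemble and the sampling), whereas the paper's argument mixes them by treating $E[R_G(u)]$ and the binomial sampling step in the same expectation. Your closing remark about the $(s-1)/(n-1)$ coefficient under sampling without replacement and the need to condition on $u \in S$ is a genuine subtlety that the paper glosses over entirely; as you note, it only perturbs the constant term already being absorbed by the $\approx$. The one thing the paper's route buys is consistency of notation with Theorem 2, since equation \eqref{eq:ract} is reused verbatim, but mathematically your argument is the tighter one.
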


\begin{proof}
We are interested in computing the rank of a node $u$ having degree $d_u$. Let's take a random variable $N_j$, that denotes the number of nodes having degree $j$ in the network. The expected value of $N_j$ can be computed as, $E[N_j] = n \cdot f_1(j)$.

The expected rank of node $u$ in network $G$ can be computed as:
\begin{center}
$E[R_{G}(u)] = E[\sum_{j=d_u+1}^{d_{max}}(N_j) +1]$
\begin{equation}\label{eq:ract}
E[R_{G}(u)] = \sum_{j=d_u+1}^{d_{max}}(n \cdot f_1(j)) +1
\end{equation}
\end{center}
Now, we have a uniform sample $S$ of size $s$. In network $G$, the probability $p$ to sample a node $y$ uniformly at random having degree greater than $d_u$ $(d_y > d_u)$ can be defined as,

\begin{center}
$p = \frac{\sum_{j=d_u+1}^{d_{max}}(n \cdot f_1(j))}{\sum_{j=1}^{d_{max}}(n \cdot f_1(j))}$ 
\end{center}
Using equation \eqref{eq:ract},
\begin{center}
$p = \frac{E[R_{G}(u)] -1}{\sum_{j=1}^{d_{max}}(n\cdot f_1(j))} $\\
$ p = \frac{E[R_{G}(u)] -1}{n \sum_{j=1}^{d_{max}}( f_1(j))} $
\end{center}
Using the property of probability distribution $\sum_{j=1}^{d_{max}}f_1(j)=1$.
\begin{center}
\begin{equation}\label{eq:p}
E[R_{G}(u)] = p \cdot n +1
\end{equation}
\end{center}
The expected value of local rank of node $u$ in sample $S$ can be computed as,
\begin{center}
$E[R_{S}(u)] = \sum_{j=0}^{s} \begin{pmatrix} s\\ j \end{pmatrix} p^j (1-p)^{(s-j)} j + 1$
\end{center}
\begin{center}
\begin{equation}\label{eq:l}
E[R_{S}(u)] = s \cdot p +1
\end{equation}
\end{center}
Using equations \eqref{eq:p} and \eqref{eq:l},\\
\begin{center}
$E[R_{S}(u)] = \frac{s}{n} E[R_{G}(u)] + \frac{n-s}{n}$
\end{center}
Where, $0 \leq (n-s)/n <1$, if $s \leq n$. So,
\begin{center}
$E[R_{S}(u)] \approx \frac{s}{n} E[R_{G}(u)] $
\end{center}
\end{proof}

In a network G, $R_{act}(u) \approx E[R_{G}(u)] $ and $ R_{local}(u) \approx E[R_{S}(u)]$. $R_{local}(u)$ denotes the rank of node $u$ in sample $S$, and $R_{local}(u)= \sum_{j=i+1}^{d'_{max}}(n'_j) +1 $, where $n'_j$ is the number of nodes having degree $j$ in sample $S$. Using theorem 3, the actual rank of node $u$ can be computed as,
\begin{center}
\begin{equation}\label{eq:es}
R_{act}(u) \approx \frac{n}{s}R_{local}(u)
\end{equation}
\end{center}

\begin{corollary}
In a network $G$, using uniform samples, degree rank of a node $u$ can be estimated as, $R_{est}(u) = \frac{n'}{s}R_{local}(u)$, where $n'$ is the estimated network size.
\end{corollary}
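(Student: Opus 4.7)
The plan is to derive this corollary as essentially a cosmetic rewrite of Theorem~3 combined with the plug-in substitution $n \mapsto n'$. First I would recall the conclusion of Theorem~3, namely $E[R_{S}(u)] \approx \frac{s}{n} E[R_{G}(u)]$, which already captures the entire probabilistic content. The corollary is then obtained by (i) passing from expectations to observed realizations, and (ii) replacing the unknown $n$ with its estimator $n'$ produced by the HK method described in Section~3.1.

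For step (i), I would argue as in the unnumbered derivation immediately preceding the corollary: the observed local rank $R_{local}(u)$ in the uniform sample $S$ is the realized value of the random variable $R_{S}(u)$, and the actual rank $R_{act}(u)$ is the realized value of $R_{G}(u)$ (which, since $G$ is fixed and we only have randomness in the sample, actually equals $E[R_{G}(u)]$ trivially once we view $G$ as deterministic). Substituting into $E[R_{S}(u)] \approx \frac{s}{n} E[R_{G}(u)]$ and solving for $R_{act}(u)$ yields
\begin{equation*}
R_{act}(u) \;\approx\; \frac{n}{s}\, R_{local}(u),
\end{equation*}
which is exactly equation~(5) in the text.

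For step (ii), I would simply note that in the practical setting $n$ is not directly accessible, so we replace it with the estimator $n'$ obtained from the sample. Defining $R_{est}(u) := \frac{n'}{s} R_{local}(u)$ then gives the stated formula and records it as our estimator of $R_{act}(u)$.

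The only conceptual obstacle is philosophical rather than technical: justifying why it is legitimate to treat the single observed value $R_{local}(u)$ as a stand-in for its expectation $E[R_{S}(u)]$, and similarly for the $n \mapsto n'$ substitution. Since Theorem~3 was stated with an ``$\approx$'' that already absorbs the lower-order term $\frac{n-s}{n}$, and since $n'$ is assumed accurate from the preprocessing step, both substitutions introduce only lower-order errors and the corollary follows with essentially no further calculation.
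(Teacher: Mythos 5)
Your proposal matches the paper's own derivation: the text preceding the corollary identifies $R_{act}(u)$ with $E[R_{G}(u)]$ and $R_{local}(u)$ with $E[R_{S}(u)]$, inverts Theorem~3 to get $R_{act}(u) \approx \frac{n}{s}R_{local}(u)$, and then substitutes the estimated size $n'$ for $n$. This is essentially the same approach, so no further comparison is needed.
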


\subsection{Using Metropolis-Hastings Random Walk (MH Method)}

In most of the online networks, uniform sampling is not possible as node ids are not known well in advance. These networks can be sampled using graph sampling techniques like breadth first traversal, random walk, etc. These sampling methods are biased towards higher degree nodes and fail to generate uniform samples. In this method, we use metropolis-hastings random walk that generates sample equivalent to uniform samples, that can be used for rank estimation.


\textbf{Metropolis-Hastings Random Walk:} This technique was first proposed by Metropolis et al. \cite{metropolis1953equation} in 1953. In this method, the probability distribution of random walk is modified so that the collected samples retain the properties of the actual distribution of the dataset. At each time step, the crawler will move to the next node with probability $p$ and will stay at the same node with probability $(1-p)$. So, the probability distribution can be modified as,

$P_{u \rightarrow v} =\left\{\begin{matrix}
\frac{1}{d_u} \cdot min(1,\frac{d_u}{d_v}),  & if \; v \; is \; the \; neighbor \; of \; u, \\
1-\sum_{w \neq u}P_{u \rightarrow w}, & if \; v=u, \\
0, & otherwise.
\end{matrix}\right.$

This probability distribution collects more samples of lower degree nodes and fewer samples of higher degree nodes, so the collected samples are not biased towards higher degrees. Gjoka et al. studied that in real world network, the samples collected using metropolis-hastings random walk are equivalent to uniform samples, and can be used to study the network parameters \cite{gjoka2010walking}. Corollary 3.1 can be used to estimate degree rank using MHRW samples.

%

\subsection{Using Random Walk (RW Method)}

The classical random walk is a well known easier method to collect the samples in large dynamic networks. In \textbf{Random Walk}, a crawler starts from a randomly chosen node. It moves to the next node that is chosen uniformly at random among the neighbors of the current node \cite{lovasz1993random}. The probability to move to node $v$ from node $u$ is defined as,

$P_{u \rightarrow v} =\left\{\begin{matrix}
\frac{1}{d_u}, & if \; v \; is \; a \; neighbor \; of \; u, \\
0, & otherwise.
\end{matrix}\right.$

In a random walk, the probability of a node being sampled converges to a stationary distribution, $p(u)=d_u/2m$. So, the collected samples are biased towards high degree nodes. We propose Theorem~\ref{RW} to estimate degree rank using random walk samples.

First, notice that in a random walk, the probability of a node being sampled is directly proportional to its degree. These samples can be converted to uniform samples using a new probability distribution, where the probability of picking a node is inversely proportional to its degree $p(u) \propto 1/d_u$,  known as re-weighted random walk sampling technique~\cite{hansen1943theory}. 

\begin{theorem}~\label{RW}
In a network $G$ $(G \in \mathcal{G}(f_1))$, using random walk sample $S$, the degree rank of node $u$ can be computed as, $R_{act}(u) \approx \frac{n}{k} \cdot R_{local}(u)$, where $R_{local}(u) = \sum_{j=d_u+1}^{d'_{max}}(q(j) \cdot k) + 1$, and $k$ is a constant, $q(j)$ is the re-sampling probability function  $q(j)=\frac{n'_j/j}{\sum_{i=d'_{min}}^{d'_{max}} n'_i/i}$, and $n'_j$ represents total number of nodes having degree $j$ in sample $S$.
\end{theorem}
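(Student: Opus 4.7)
The plan is to reduce the random-walk case to the uniform-sampling case already handled by Corollary 3.1. Since the stationary distribution of the simple random walk is $p(u) = d_u/(2m)$, a node of degree $j$ is sampled roughly $j$ times more often than a node of degree $1$. So from $S$ I would first construct a re-weighted estimate of the underlying degree distribution $f_1$, and then argue that this re-weighted estimate behaves like the empirical degree distribution of a uniform sample of some effective size $k$. Once this equivalence is in place, the theorem will follow by substitution into the uniform-sampling estimator.

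For the re-weighting step, I would use the standard Hansen--Hurwitz-style estimator associated with reweighted random walks. Given the raw counts $n'_j$ (number of sampled nodes of degree $j$ in $S$), the fact that each sampled node $u_i$ should carry inverse-probability weight $1/d_{u_i}$ means that the unbiased estimator for the fraction of degree-$j$ nodes is
\begin{equation*}
q(j) \;=\; \frac{n'_j/j}{\sum_{i=d'_{min}}^{d'_{max}} n'_i/i},
\end{equation*}
which, after mixing of the walk, approximates $f_1(j)$. Thus the re-weighted sample behaves, in distribution over degrees, exactly like a uniform sample from $G$; the count $n'_j$ in the uniform-sampling derivation (Section~5.2) is replaced by the re-weighted count $q(j)\cdot k$, where $k$ plays the role of the effective uniform sample size.

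With this correspondence fixed, I would plug directly into the formula for $R_{local}(u)$ from the uniform case. By definition, the local rank in the re-weighted (pseudo-uniform) sample is
\begin{equation*}
R_{local}(u) \;=\; \sum_{j=d_u+1}^{d'_{max}} \bigl( q(j)\cdot k \bigr) + 1,
\end{equation*}
matching the statement. Applying Corollary 3.1 (equivalently, equation~\eqref{eq:es}) to this pseudo-uniform sample of size $k$ then yields the extrapolation $R_{act}(u) \approx \frac{n}{k}\,R_{local}(u)$.

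The main obstacle I anticipate is rigorously justifying the equivalence between reweighted random-walk samples and uniform samples. This relies on the random walk being on a connected non-bipartite graph and having run long enough to mix to its stationary distribution; only then does the inverse-degree reweighting produce an unbiased estimate of $f_1$. A secondary subtlety is identifying the correct value of the ``constant'' $k$: because the reweighted estimator down-weights high-degree samples, the effective uniform-sample size is smaller than $|S|$, and one must interpret $k$ as the normalization already implicit in $q(j)$ so that $\sum_j q(j)\cdot k = k$ gives a self-consistent effective sample size to feed into Corollary 3.1.
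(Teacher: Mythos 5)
Your proposal matches the paper's proof: both define the inverse-degree re-sampling probability $q(j)$, treat $q(j)\cdot k$ as a pseudo-uniform sample of effective size $k = \sum_j q(j)\cdot k$, and then invoke the uniform-sampling extrapolation (equation~\eqref{eq:es}) to obtain $R_{act}(u) \approx \frac{n}{k}\,R_{local}(u)$. Your remarks on mixing and on interpreting $k$ go slightly beyond what the paper makes explicit, but the argument is the same.
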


\begin{proof}
The probability $q$ to resample a $j$ degree node can be computed as, $q(j)= \frac{n'_j/j}{\sum_{i=d'_{min}}^{d'_{max}} n'_i/i}$, where $n'_j$ represents total number of nodes having degree $j$ in sample $S$.

To estimate the degree rank of a node, collect $q(j) \cdot k$ samples of each degree $j$ from $S$, where $k$ is a constant. So, total number of new sampled nodes $|S'| = \sum_{j=d'_{min}}^{d'_{max}}\left( q(j) \cdot k \right) =k$.
Then, the rank of node $u$ in $S'$ can be computed as, $R_{local}(u) = \sum_{j=d_u+1}^{d'_{max}}(q(j) \cdot k) + 1$.


As the new sample set $S'$ follows uniform distribution, the rank of node $u$ can be computed using equation \eqref{eq:es}, $R_{act}(u) \approx \frac{n}{k} \cdot R_{local}(u)$.~\end{proof}

In the experiments, the value of $k$ is chosen as $k = min(1/q)$, so that the regenerated samples also contain higher degree nodes and their rank is estimated with high accuracy. 

\begin{corollary}
In a network $G$, using random walk samples, the degree rank of a node $u$ can be estimated as, $$R_{est}(u)= n' \cdot \frac{\sum_{j=d_u+1}^{d'_{max}} \left( \frac{n'_j}{j} \cdot k \right) }{\sum_{j=d'_{min}}^{d'_{max}} \left( \frac{n'_j}{j} \cdot k \right) },$$ where $n'_j$ represents total number of nodes having degree $j$ in sample $S$, and $k$ is a constant.
\end{corollary}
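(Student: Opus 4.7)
The plan is to obtain the stated formula by direct substitution into Theorem~\ref{RW}, after replacing the exact network parameters by their sample-based estimates. First I would invoke Theorem~\ref{RW}, which asserts
$$R_{act}(u) \approx \frac{n}{k} \cdot R_{local}(u), \qquad R_{local}(u) = \sum_{j=d_u+1}^{d'_{max}} q(j)\, k + 1,$$
with $q(j) = (n'_j/j)/\sum_{i=d'_{min}}^{d'_{max}} (n'_i/i)$. Replacing the exact network size $n$ by its estimate $n'$ (from the HK method of Section~3.1) converts this into an expression for $R_{est}(u)$.

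Next I would substitute the explicit formula for $q(j)$ into $R_{local}(u)$, giving
$$R_{local}(u) = k \cdot \frac{\sum_{j=d_u+1}^{d'_{max}} n'_j/j}{\sum_{i=d'_{min}}^{d'_{max}} n'_i/i} + 1.$$
Multiplying through by $n'/k$ then yields
$$R_{est}(u) = n' \cdot \frac{\sum_{j=d_u+1}^{d'_{max}} n'_j/j}{\sum_{i=d'_{min}}^{d'_{max}} n'_i/i} + \frac{n'}{k}.$$
To match the stated form, I would multiply both numerator and denominator of the principal term by the constant $k$ (which does not alter the value), recovering $n' \cdot \frac{\sum_{j=d_u+1}^{d'_{max}} (n'_j/j) \cdot k}{\sum_{i=d'_{min}}^{d'_{max}} (n'_i/i) \cdot k}$.

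The only non-mechanical step is discarding the trailing additive $n'/k$ (coming from the $+1$ offset in $R_{local}(u)$). I would justify this as a standard approximation, exactly analogous to the one already used in the proof of the uniform-sampling theorem, where a term of the form $(n-s)/n$ is absorbed into the approximation symbol because it is dominated by the principal estimator for the ranks and sample sizes of interest. Since the rest of the argument is pure algebra chaining a single theorem with the definition of $q(j)$, there is no real obstacle here; the only thing to watch carefully is to keep the estimated quantities ($n', d'_{min}, d'_{max}, n'_j$) consistently in place of their exact counterparts throughout the substitution.
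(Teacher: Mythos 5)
Your derivation is correct and is essentially what the paper intends: the corollary is stated without a separate proof as a direct substitution of $q(j)$ into Theorem~4 with $n$ replaced by its estimate $n'$, exactly as you do. Your explicit flagging of the discarded additive term $n'/k$ (the image of the $+1$ offset in $R_{local}(u)$) is a fair reading of the paper's implicit approximation, and your analogy to the $(n-s)/n$ term dropped in the uniform-sampling argument is apt, since both discarded terms are of order $n/s$ and the paper itself acknowledges this as the source of the extra error for the top-ranked nodes.
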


\section{Simulation Results}

In this section, we will discuss the datasets, error functions, and simulation results.

\subsection{Datasets}

All proposed methods are simulated on both synthetic as well as on real world scale-free networks. Synthetic networks are generated using Barab\'{a}si-Albert (BA) model $G(n,k)$, where each new coming node makes $k$ preferential connections with already existing nodes \cite{barabasi1999emergence}. The probability $p(u)$ to make a connection with an existing node $u$ is directly proportional to the degree of node $u$, as $p(u) = {d_u}/{\sum_{v}d_v}$. So, the nodes having higher degrees acquire more links over time and it gives birth to power law degree distribution. All synthetic datasets are explained in Table~\ref{syn-data}.

\begin{table}[htp]
\centering
\caption{Datasets}
\label{syn-data}
\begin{tabular}{|l|l|l|l|l|}
\hline
Network & $\#$Nodes & $\#$Edges  \\ \hline 
BA1 & 100000 & 999900  \\ \hline
BA2 & 200000 & 1999900 \\ \hline
BA3 & 300000 & 2999900  \\ \hline
BA4 & 400000 & 3999900  \\ \hline
BA5 & 500000 & 4999900  \\ \hline
\end{tabular}
\end{table}

All real world datasets are explained below:

\begin{enumerate}

\item Academia Online Social Network: Academia.edu is an online website where academics share research papers. This is an extracted social network from this website \cite{nr, Fire2011}. It contains 200167 nodes and 1022440 edges.

\item Actor Collaboration Network: In actor collaboration network, nodes represent actors and they are connected by an edge if they both have performed in the same movie \cite{barabasi1999emergence}. This network contains 374511 nodes and 15014839 edges.

\item Catster Friendship Network: This social Network is created using the friendships between catster.com website users \cite{catster}. catster provides a platform to cat owners and lovers, where they can connect with each other and share the information. It contains 148826 nodes and 5447464 edges.

\item DBLP Collaboration Network: This is a coauthorship network extracted from DBLP computer science bibliography, where edge denotes that the authors have common publications \cite{yang2015defining}. This network contains 317080 nodes and 1049866 edges.

\item Delicious online social network: Delicious is a social bookmarking web service for storing, sharing, and discovering web bookmarks. The dataset contains all links among users \cite{nr, zafarani2014users}. It contains 536108 nodes and 1365961 edges.

\item Digg Friendship Network: This friendship network was extracted from Digg website in 2009 \cite{hogg2012social}. It contains 261489 nodes and 1536577 edges.

\item Dogster Friendship Network: This social Network is created using the friendships between users of the website http://www.dogster.com \cite{nr}. catcher provides a platform to cat owners and lovers, where they can connect with each other and share the information. It contain 426485 nodes and 8543321 edges.

\item Douban online social network: Douban is an online social network that provides user review and recommendation services for movies, books, and music. This is the friendship network extracted from the website \cite{nr, zafarani2014users }. It contains 154908 nodes and 327162 edges.

\item European Email Communication Network: This is the email communication network of a European research institution, where a node represents an individual person and an edge represents that at least one email has been exchanged between them \cite{leskovec2007graph}. This dataset was collected from October 2003 to May 2005 (18 months).

\item Facebook Network: Facebook is the most popular online social networking site today. This dataset is the induced subgraph of Facebook, where users are represented by nodes and friendships are represented by edges \cite{Traud_2011fs,traud2012social}. It contains 3097165 nodes and 23667394 edges.

\item Friendster Network: This is the induced subgraph of Friendster online social network \cite{nr}. Nodes represent users and a directed edge (a,b) indicates that user a has added user b to his friendship lists. The network is converted to a undirected network for study. It contains 5689498 nodes and 14067887 edges.

\item Foursquare Network: Foursquare is a location-based social networking software for mobile devices that can be accessed using GPS. This dataset is an induced subgraph of friendships of Foursquare \cite{zafarani2009social}. It contains 639014 nodes and 3214985 edges.

\item Gowalla Social Network: This network is extracted from a location-based social network called, Gowalla \cite{cho2011friendship}. This was used to share the locations among its users. In this network, a node represents a user and an edge indicates the friendship between the user. It contains 196591 nodes and 950327 edges.

\item Google Plus Social Network: This is an induced subgraph of Google plus online social network \cite{mcauley2012learning}. It contains 107614 nodes and 12238285 edges.

\item Hollywood Collaboration Network: This is an undirected collaboration network of Hollywood movie actors where nodes are actors, and there is an edge between two actors if they have appeared in a movie together \cite{BoVWFI}.It contains 1069126 nodes and 56306653 edges.

\item Hyves: Hyves is a popular social networking site in the Netherlands that is mainly used by Dutch visitors. This is the induced subgraph of this that was collected in December 2010 \cite{zafarani2009social}. The network is undirected and unweighted. It contains 1402673 nodes and 2777419 edges.

\item last.fm Network: Last.fm is a music website that has more than 40 million active users \cite{nr, konstas2009social }. This is the induced friendship networks of the bloggers from this website. It contains 1191805 nodes and 4519330 edges.

\item Livemocha Network: Livemocha was an online language learning website and this network is extracted from the social connections of the website \cite{ZafaraniLiu2009}. This contains 104103 nodes and 2193082 edges.

\item Pokec Online Social Network (soc-pokec): Pokec is a popular online social network in Slovakia. The dataset contains a list of user relationships \cite{nr}. It contains 1632803 nodes and 22301964 edges.

\item Youtube Social Network: This is the induced subgraph of Youtube social network \cite{ zafarani2009social}. This network contains 1134885 nodes and 2987468 edges.
\end{enumerate}

\subsection{Error Functions}
The accuracy of all methods is evaluated using absolute and weighted error functions. These are discussed below:
\begin{enumerate}
\item Absolute Error: Absolute error for a node $u$ is computed as,
\begin{center}
$Err_{abs}(x) = |R_{est}(u) - R_{act}(u)|$
\end{center}

The \textbf{percentage average absolute error} can be computed as $$ Err_{paae} = \frac{\text{average \; absolute \; error}}{\text{network \; size}} \cdot 100\%.$$

\item Weighted Error: In real life applications, the significance of the error depends on two important parameters: $1.$ rank of the node, and $2.$ network size. The same rank difference has more impact for the higher ranked nodes than the lower ranked nodes. Similarly, the same error in the rank will be perceived higher in smaller networks than the larger networks. We consider both of these parameters and propose a weighted error function. It is defined as,
\begin{center}
$Err_{wtd}(x) = \frac{Err_{abs}(x)}{n} \cdot \frac{(n - R_{act}(u) +1)}{n} \cdot 100 \%$
\end{center}

Where, $\frac{(n - R_{act}(u) +1)}{n} \times 100$ denotes percentile of node $u$. The weighted error increases linearly with the percentile and decreases with the network size, if the absolute error is constant.
\end{enumerate}

\subsection{Results and Discussion}

In this section, we will discuss simulation results of all proposed methods. The network parameters like size, average, maximum, and minimum degree are estimated using the methods discussed in Section 2. The network size estimation method converges approximately at $1\%$ samples. Each experiment is repeated 10 times and the average value is considered for further experiments.

To measure the performance of the proposed methods, the average error is calculated for each degree and it is averaged over all degrees to compute the overall error in rank estimation. Each value (absolute and weighted error) is computed by taking the average of 20 iterations of the experiment. Results for US, MH, and RW methods are shown when $1\%$ nodes are sampled. All methods are validated on 20 real world social networks, and the summarized results are shown in Table~\ref{avgerr}. The detailed results are shown in Appendix~\ref{appendix1} 

\begin{table}[h!]
\centering
\caption{Average Estimation Error on 20 Real World Social Networks}
\label{avgerr}
\begin{tabular}{|l|l|l|}
\hline
Method & $Err_{paae}$ & $Err_{wtd}$ \\ \hline
PL & 1.51 & 1.14  \\ \hline
US & 0.13 & 0.12  \\ \hline
MH & 0.50 & 0.41  \\ \hline
RW & 0.16 & 0.13  \\ \hline
\end{tabular}
\end{table}

Results show that US method performs best on real world networks. As uniform sampling is not possible in real world networks, RW method is the most feasible and accurate method. In random walk samples, the probability of sampling a node is directly proportional to its degree once the samples are stabled. But in our experiments, we have not removed samples before mixing time and results are shown for the starting $1\%$ samples. It makes the proposed random walk method even faster. MH method gives more error than both US and RW methods because MH random walk is not able to generate perfect uniform samples for small sample size. The efficiency of the sampling methods increases with the sample size. The performance of PL method is poor on real world networks as they do not follow the perfect power law.

\textbf{RW versus PL Method:} In RW method, same samples can be used to estimate the network size and degree rank, so it is faster than the other sampling methods. RW method is also faster than PL method because PL method estimates the average degree using smoothed random walk that is not required in RW method. But if the network parameters are already known, PL method can be used to estimate the rank in $O(1)$ time.

The detailed results are shown for 10 networks and their estimated parameters are shown in Table \ref{est-param}. The average error is shown using actual parameters (A.P.) as well as estimated parameters (E.P.) to observe the error caused due to the estimation of network parameters. Figures~\ref{baerror} and \ref{rwerror} show percentage average absolute error and average weighted error for BA and real world networks respectively. In these figures, first 2 bars show the average error of PL method using actual and estimated parameters respectively. Then it is followed by US, MH, and RW methods. It can be observed that in BA networks, RW method outperforms all other methods. In real world networks, the accuracy of RW method depends on the density and structure of the network. It gives more accurate results in sparse networks than the dense networks. The Same pattern is observed in MH method, as it also collects samples using a walk over the network.

\begin{table}[h!]
\centering
\caption{Estimated Network Parameters}
\label{est-param}
\resizebox{0.9\columnwidth}{!}{%
\begin{tabular}{|l|l|l|l|l|l|l|}
\hline
Network & \multicolumn{2}{|c|}{Number of Nodes} & \multicolumn{2}{|c|}{Average Degree} \\ \hline 
 & Actual & Estimated & Actual & Estimated \\ \hline
BA1 & 100000 & 106773 & 20.00 & 19.68 \\ \hline
BA2 & 200000 & 199303 & 20.00 & 19.75 \\ \hline
BA3 & 300000 & 292649 & 20.00 & 191.09 \\ \hline
BA4 & 400000 & 406837 & 20.00 & 20.06 \\ \hline
BA5 & 500000 & 500688 & 20.00 & 20.30 \\ \hline
Actor & 374511 & 417560 & 80.18 & 92.53 \\ \hline
DBLP & 317080 & 315587 & 6.62 & 7.20 \\ \hline
Digg & 261489 & 260435 & 11.75 & 17.00 \\ \hline
Eu-Email & 224832 & 223151 & 3.02 & 2.96 \\ \hline
Gowalla & 196591 & 199568 & 9.67 & 10.92 \\ \hline
Youtube & 1134885 & 1136445 & 5.26 & 10.15 \\ \hline
\end{tabular}
}
\end{table}

\begin{figure*}[]
  \centering
  \subcaptionbox{Absolute Error}[1.0\linewidth][c]{%
    \includegraphics[width=1.0\linewidth]{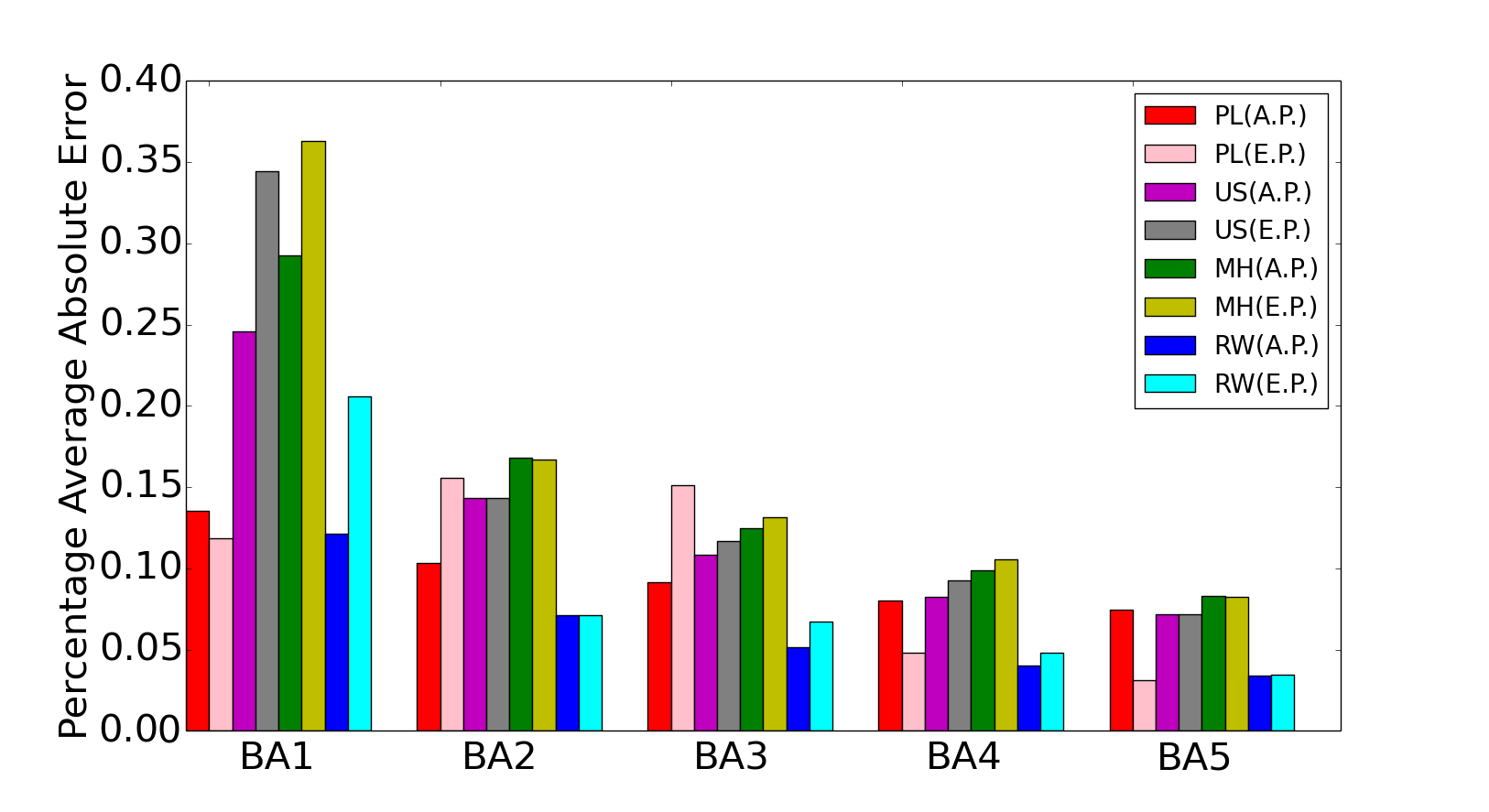}}\quad
  \subcaptionbox{Weighted Error}[1.0\linewidth][c]{%
    \includegraphics[width=1.0\linewidth]{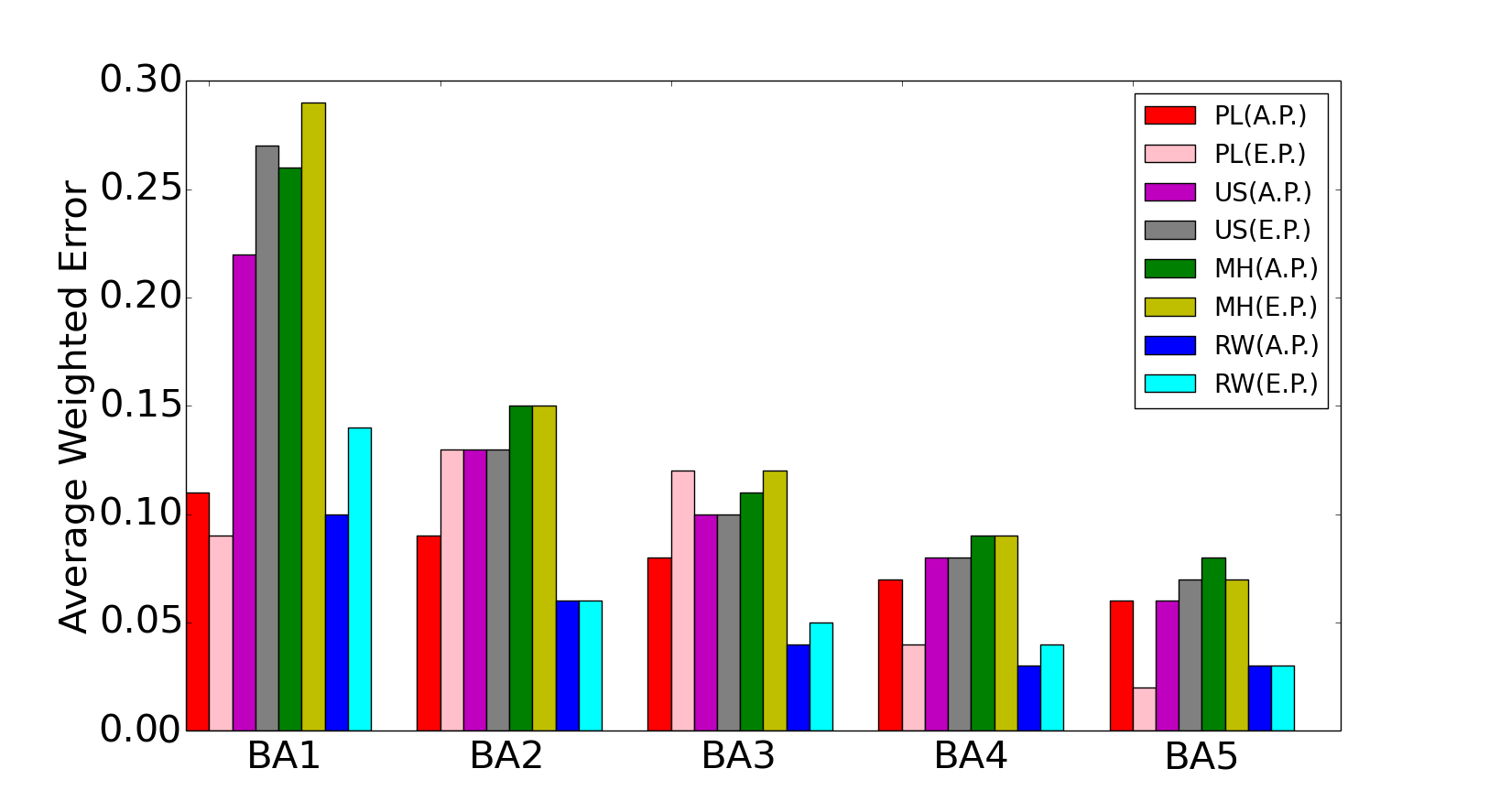}}
  \caption{Average Estimation Error for BA Networks}
  \label{baerror}
\end{figure*}

\begin{figure*}[]
  \centering
  \subcaptionbox{Absolute Error}[1.0\linewidth][c]{%
    \includegraphics[width=1.0\linewidth]{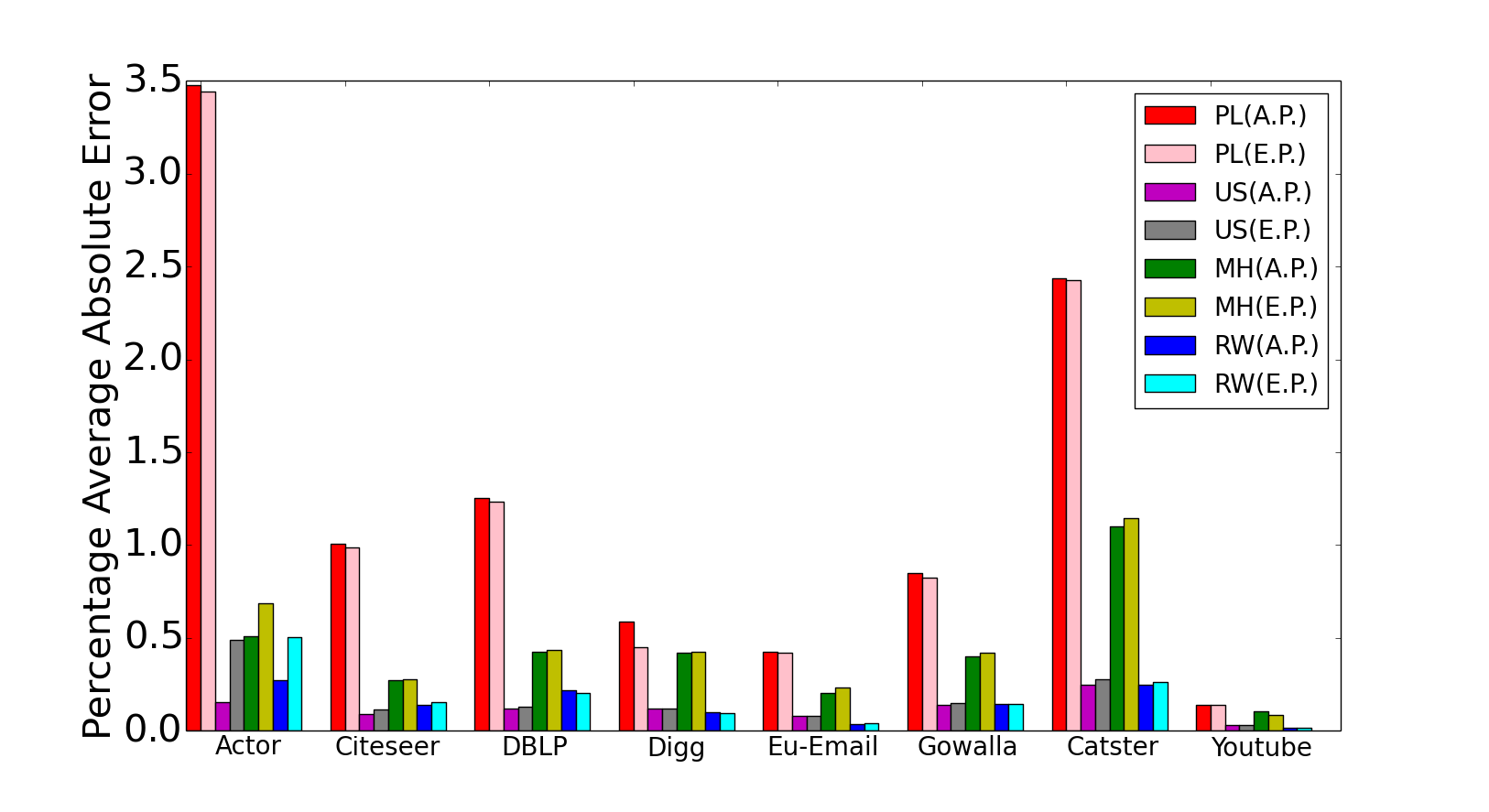}}\quad
  \subcaptionbox{Weighted Error}[1.0\linewidth][c]{%
    \includegraphics[width=1.0\linewidth]{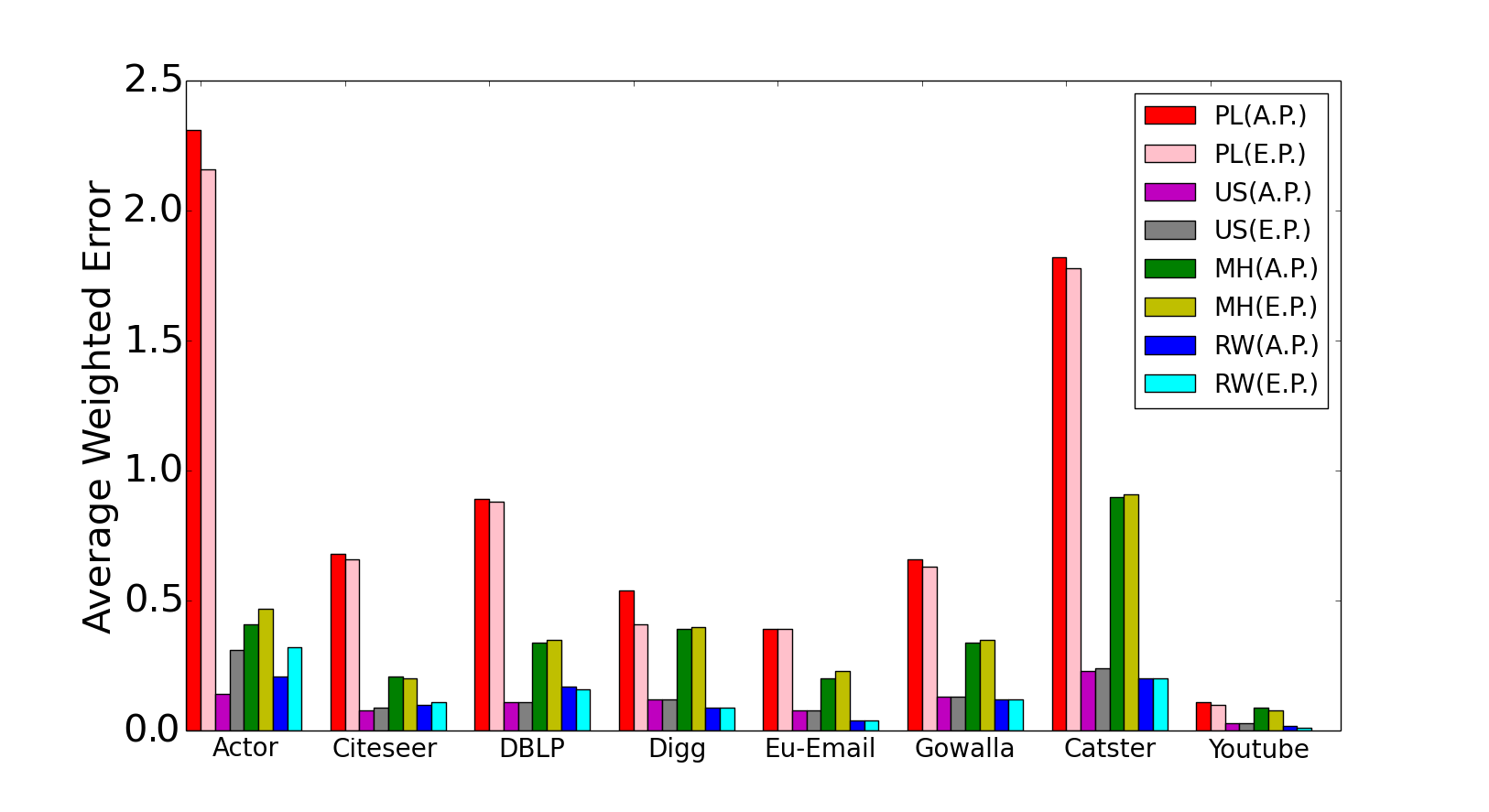}}
  \caption{Average Estimation Error for Real World Networks}
  \label{rwerror}
\end{figure*}

We further study, the behavior of estimation error with degree rank. Figure~\ref{rankerror} shows absolute error versus degree rank for real world (a. Actor, and b. DBLP) networks and BA network. In all methods, estimation error increases with the rank. Figures \ref{rankerror}(a) and \ref{rankerror}(b) show that for high ranked nodes, RW method outperforms other methods. US method gives more error for high ranked nodes, as the linear extrapolation technique starts assigning rank from $n/s$. If $\alpha$ percent nodes are sampled, it will start ranking nodes from $100/\alpha$, that will induce more error for high ranked nodes. Figure \ref{rankerror}(c) shows that PL method outperforms other methods in BA networks, that is followed by RW method. PL method estimates the rank in $O(1)$ time once the preprocessing steps are done. We observe that PL method works well for BA networks but it gives a huge error for real world networks. It happens because of these two reasons. Firstly, real world networks do not follow the perfect power law. Secondly, the rank of a $d$ degree node is computed by integrating the probability distribution function from $d+1$ to $d_{max}$, but in real world networks nodes of some degrees might not be present. This further adds up to more error. 

\begin{figure*}[]
  \centering
  \subcaptionbox{Actor Network}[0.76\linewidth][c]{%
    \includegraphics[width=0.76\linewidth]{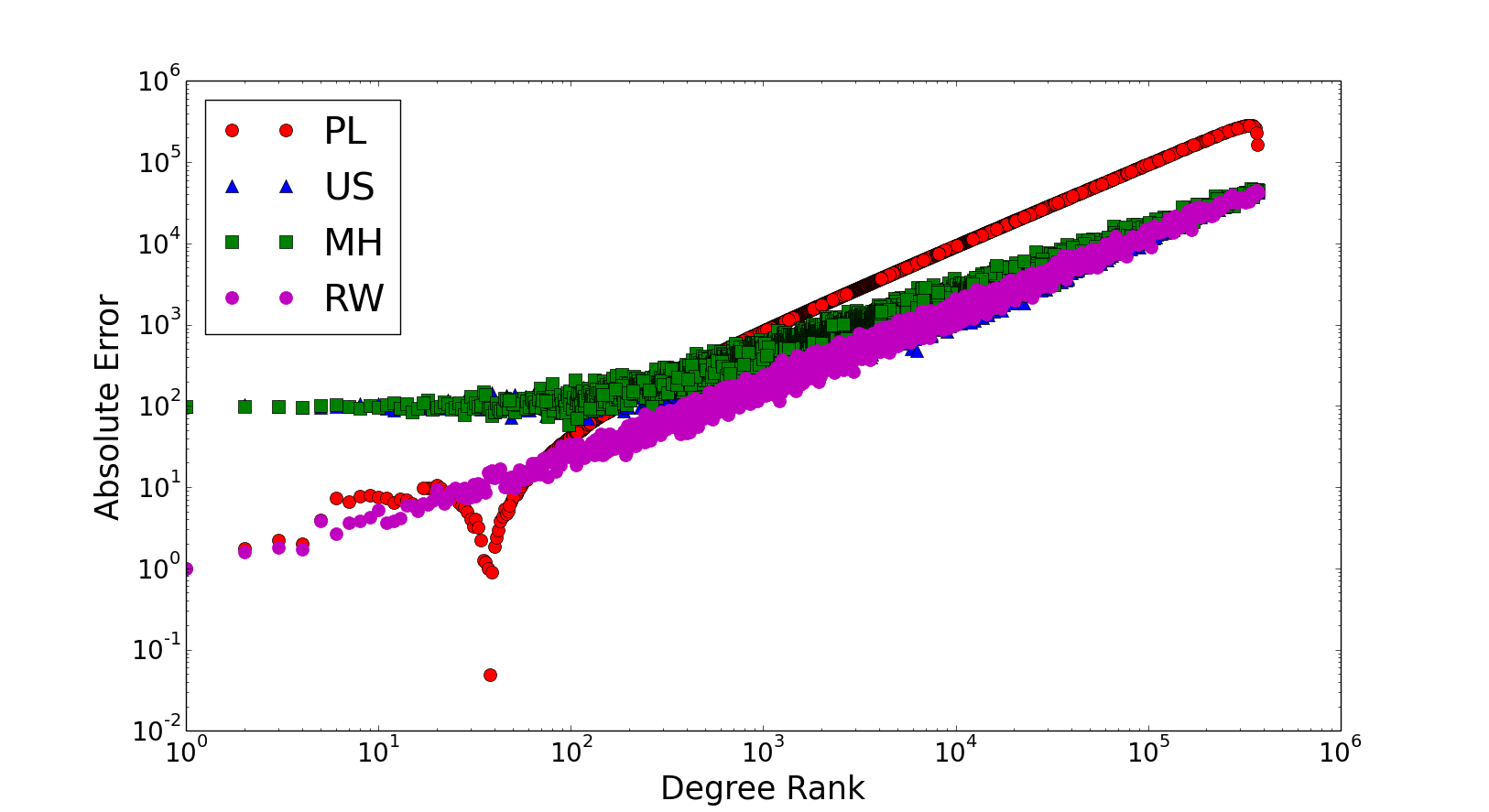}}\quad
  \subcaptionbox{DBLP Network}[0.76\linewidth][c]{%
    \includegraphics[width=0.76\linewidth]{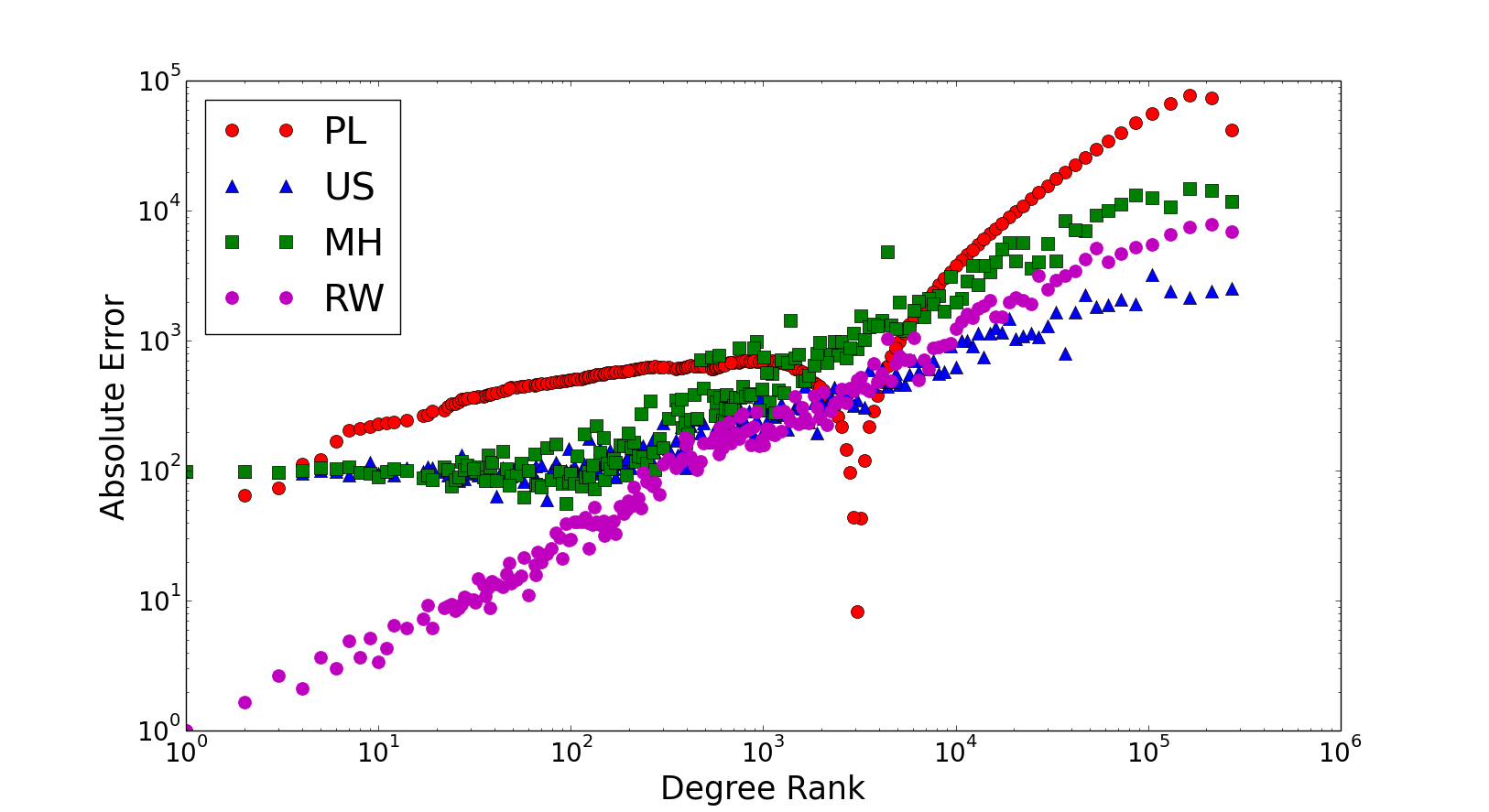}}\quad
    \subcaptionbox{BA1 Network}[0.76\linewidth][c]{%
    \includegraphics[width=0.76\linewidth]{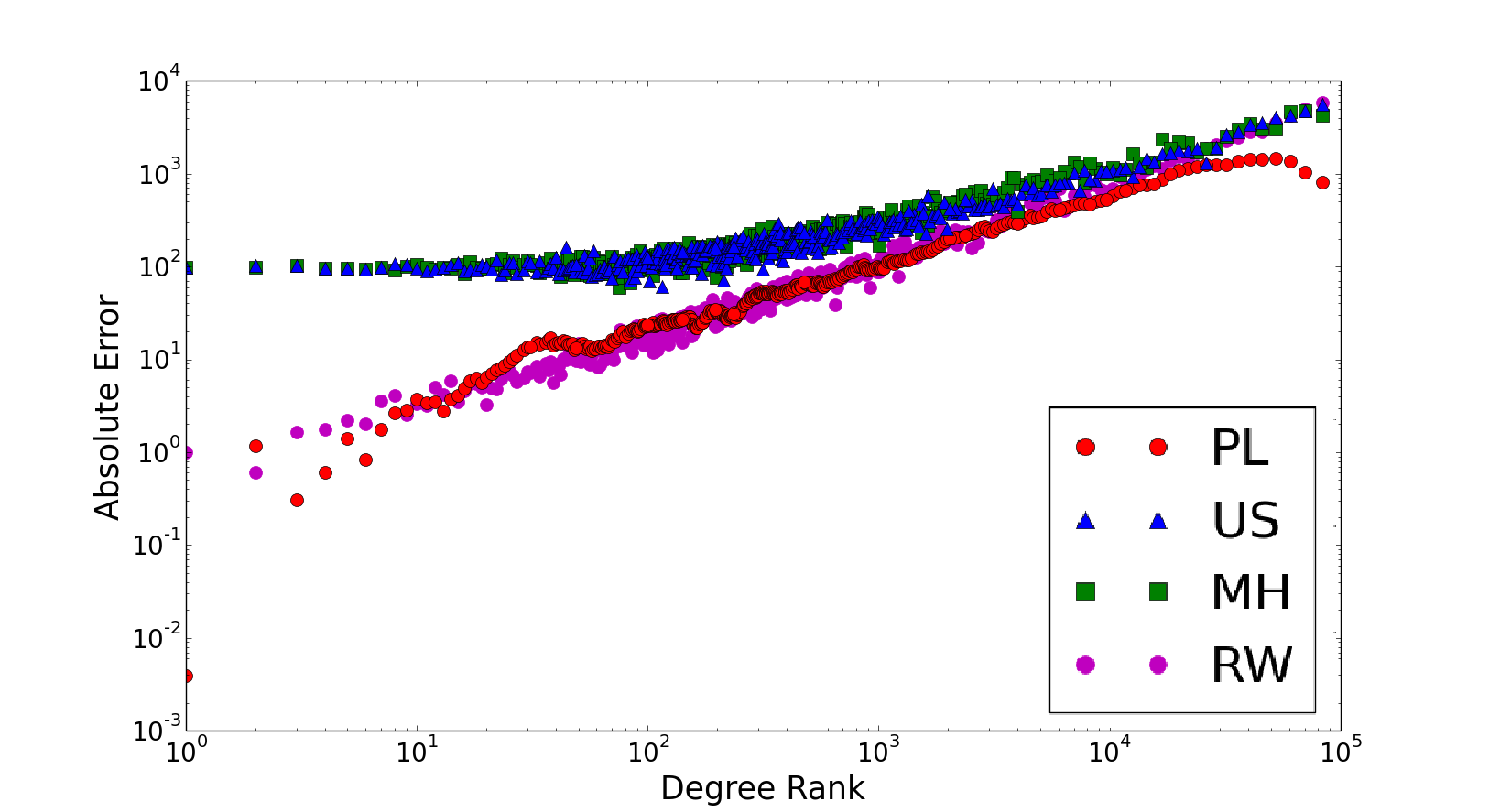}}\quad
  \caption{Absolute Estimation Error versus Degree Rank on log-log scale}
  \label{rankerror}
\end{figure*}

In figure \ref{rankerror}(a) and \ref{rankerror}(b), it can be observed that the rank estimation error in PL method decreases, goes very close to zero, and it further increases. This gives a dip in the absolute error when it is plotted with degree rank. This happens due to the error in the estimated slope of the degree distribution. The actual and estimated number of nodes versus degrees are shown in figure~\ref{degdistest} for DBLP network. So in this network, first, the estimated rank will be lower than the actual rank, then, it will be close to zero when the estimated rank is approximately equal to the actual rank, and finally, the estimated rank will be higher than the actual rank. Due to this reason, it shows a strange dip in the absolute error.

\begin{figure}[htp]
\centering
\includegraphics[width=12cm]{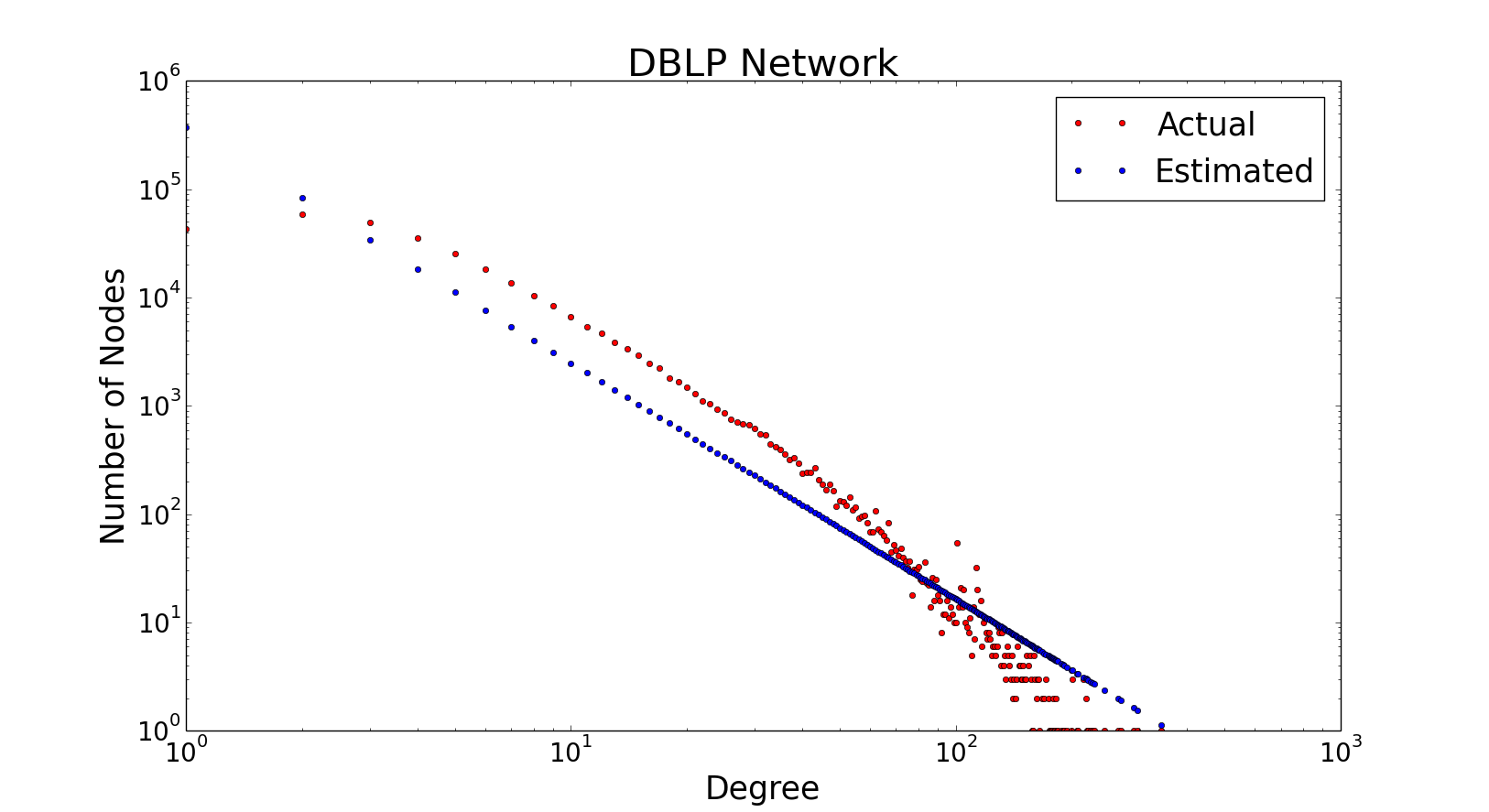}
\caption{Actual and Estimated Number of Nodes versus Degree}
\label{degdistest}
\end{figure}

Next, we study how the estimation error changes with the network size. To study the same, a BA network is evolved by maintaining the same density. In Figure~\ref{nwsize_err}, percentage average absolute error and average weighted error are plotted against the network size. Plots show that the error decreases with an increase in the network size. It can also be observed that RW method outperforms other methods as the network size increases.

\begin{figure*}[]
  \centering
  \subcaptionbox{Absolute Error}[1.0\linewidth][c]{%
    \includegraphics[width=1.0\linewidth]{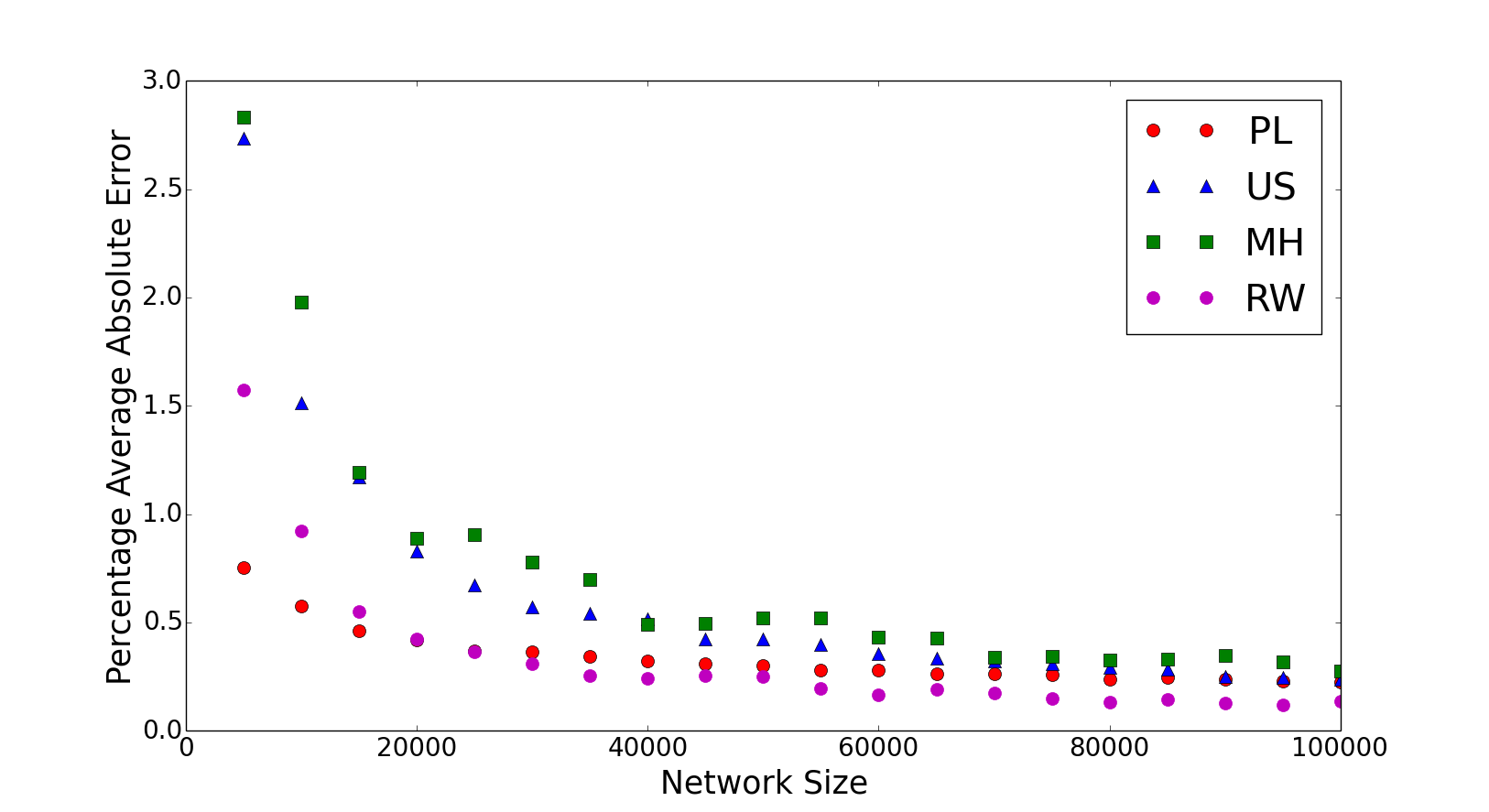}}\quad
  \subcaptionbox{Weighted Error}[1.0\linewidth][c]{%
    \includegraphics[width=1.0\linewidth]{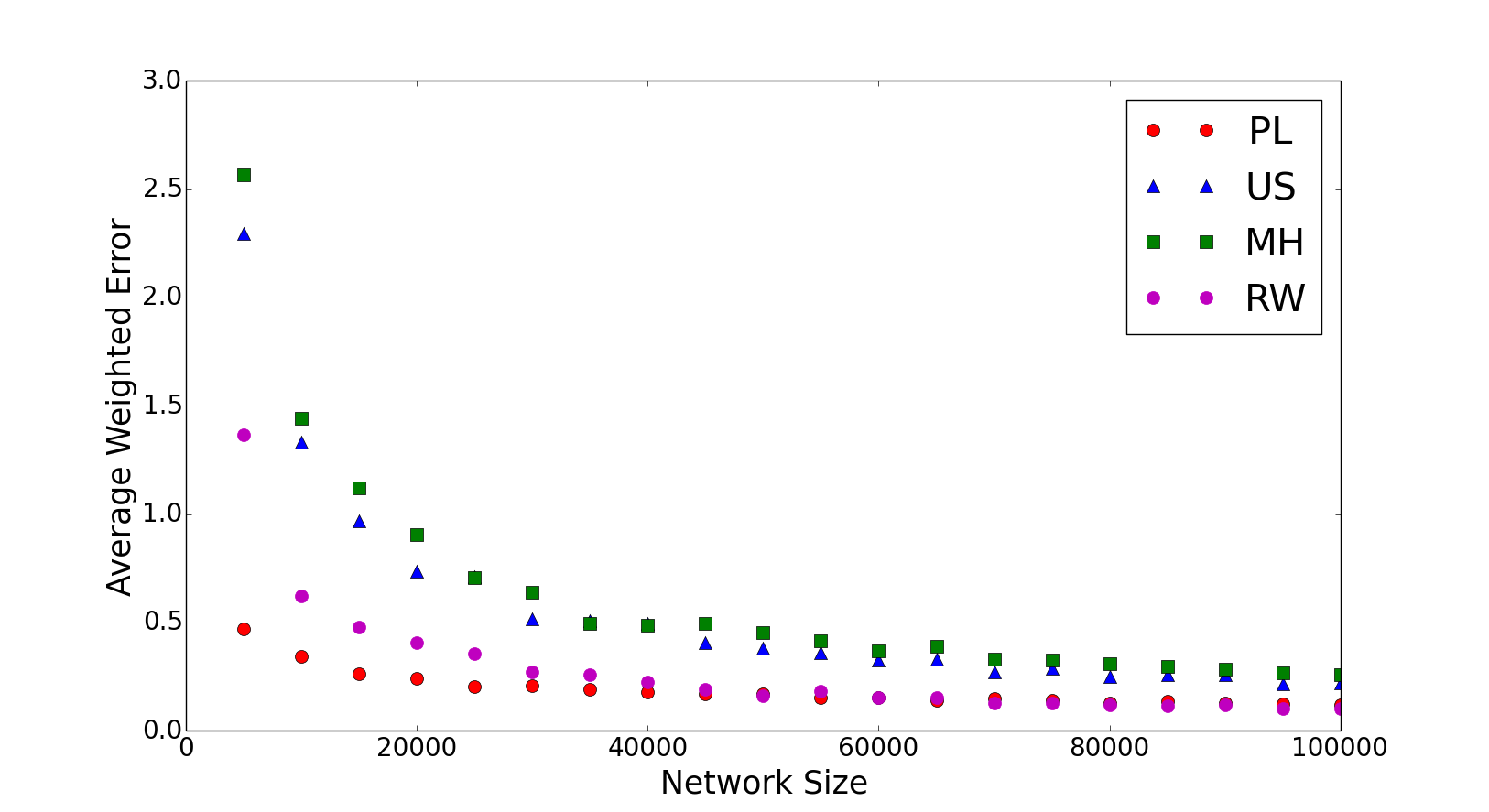}}
  \caption{Average Estimation Error versus Network Size for BA networks}
  \label{nwsize_err}
\end{figure*}

\section{Random Networks}

In this section, we study degree ranking methods for random networks. In 1959, Erd\H{o}s and R\'{e}nyi proposed a model to generate random networks, called Erd\H{o}s and R\'{e}nyi (ER) model \cite{erd6s1960evolution}. In ER model, a network is started with $n$ nodes, and an edge is placed between each pair of nodes with some fixed probability $p$. The degree distribution $g$ of random networks follows poisson law. The probability of a node having degree $j$ can be approximated as $g(j) \rightarrow \frac{(d_{avg})^je^{-d_{avg}}}{j!}$ as $n \rightarrow \infty$, where $d_{avg}$ is average degree of the network. We will discuss a ranking method based on poisson law degree distribution. The rest of the methods (US, MH, and RW method) can be directly applied to random networks, as they have no dependency on the degree distribution function. Network size and average degree are estimated using the same techniques that we have discussed earlier. 

\begin{table}[]
\centering
\caption{Estimated parameters for Erd\H{o}s and R\'{e}nyi Networks}
\label{summ}
\resizebox{\columnwidth}{!}{%
\begin{tabular}{|l|l|l|l|l|l|l|}
\hline
Network & \multicolumn{2}{|c|}{Number of Nodes} & \multicolumn{2}{|c|}{Average Degree} \\ \hline 
 & Actual & Estimated & Actual & Estimated \\ \hline
ER1 & 100000 & 99874 & 11.50 & 11.24 \\ \hline
ER2 & 200000 & 202731 & 12.34 & 12.08 \\ \hline
ER3 & 300000 & 300503 & 12.71 & 12.49 \\ \hline
ER4 & 400000 & 398168 & 12.99 & 121.01 \\ \hline
ER5 & 500000 & 505675 & 13.19 & 13.07 \\ \hline
\end{tabular}
}
\end{table}

\subsection{Using Poisson Degree Distribution (PD Method)}

This method uses Poisson degree distribution of random networks to estimate degree rank of a node.

\begin{lemma}
In a random network G $(G \in \mathcal{G}(g))$, the expected degree rank of a node $u$ can be computed as, $E[R_{G}(u)]= n \cdot e^{-d_{avg}}\sum_{j=d_u+1}^{d_{max}}\frac{(d_{avg})^j}{j!} +1$.
\end{lemma}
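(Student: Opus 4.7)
The plan is to mirror the derivation of Theorem~\ref{expected}, substituting the Poisson probability mass function for the power-law density. Because the rank statistic is a count of nodes having strictly larger degree, the derivation is essentially combinatorial once the expected number of nodes at each degree value is known, and no continuous approximation is required (the Poisson law is natively discrete, so the summation in the definition of $R_{act}(u)$ matches term by term).

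First I would write the actual rank as $R_{act}(u) = \sum_{j=d_u+1}^{d_{max}} N_j + 1$, where $N_j$ is the random variable counting nodes of degree exactly $j$ in $G$. Applying linearity of expectation immediately gives $E[R_{G}(u)] = \sum_{j=d_u+1}^{d_{max}} E[N_j] + 1$. Next I would compute $E[N_j]$ by observing that in the ER model each of the $n$ nodes has degree distributed according to $g$, so $E[N_j] = n \cdot g(j)$. Finally, I would substitute the Poisson pmf $g(j) = \frac{(d_{avg})^{j} e^{-d_{avg}}}{j!}$ and pull the factors $n$ and $e^{-d_{avg}}$ (which are independent of the summation index $j$) outside the sum, producing exactly the claimed expression.

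The main subtlety—rather than a genuine obstacle—is the step $E[N_j] = n \cdot g(j)$: strictly speaking, the degrees of distinct nodes in an ER graph are not independent since they share edges, and each individual degree is actually Binomial$(n-1,p)$ rather than Poisson. However, linearity of expectation does not require independence, and the Poisson form itself is already introduced in the surrounding text as the limiting approximation $g(j) \to \frac{(d_{avg})^{j} e^{-d_{avg}}}{j!}$ as $n \to \infty$, so the lemma inherits that same asymptotic regime and no additional justification is needed beyond invoking it. Consequently the proof should be short, paralleling the algebraic portion of Theorem~\ref{expected} but without the transition from sum to integral.
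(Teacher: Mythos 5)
Your proposal is correct and follows essentially the same route as the paper: decompose the rank as $\sum_{j=d_u+1}^{d_{max}} N_j + 1$, apply linearity of expectation with $E[N_j] = n\cdot g(j)$, substitute the Poisson pmf, and factor out $n\,e^{-d_{avg}}$. Your added remark that the Poisson form is itself only the $n\to\infty$ approximation of the Binomial degree law is a fair observation, and the paper handles it the same way, by simply invoking the limiting distribution.
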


\begin{proof}

In a given network $G$ that follows Poisson degree distribution, the actual rank of a node $u$ can be computed as, 

\begin{center}
$R_{act}(u) = \sum_{j=d_{u}+1}^{d_{max}}n_j  + 1$
\end{center}
where, $n_j$ represents total number of nodes having degree $j$ in network $G$. 

Let $N_j$ be a random variable that represents the total number of nodes having degree $j$ in the network $G$ $(G \in \mathcal{G}(g))$. The expected value of $N_j$ is $E[N_j]=n\cdot g(j)$. 
Then, the expected degree rank of a node $u$ can be computed as,
\begin{align*}
E[R_{G}(u)] &= E\left[ \sum_{j=d_{u}+1}^{d_{max}}N_j  + 1\right] \\
E[R_{G}(u)] &=  \sum_{j=d_{u}+1}^{d_{max}}E[N_j]  + 1\\
E[R_{G}(u)] &= \sum_{j=d_{u}+1}^{d_{max}}n\cdot g(j)  + 1
\end{align*}

As we know $g(j) \rightarrow \frac{(d_{avg})^je^{-d_{avg}}}{j!}$ as $n \rightarrow \infty$, so to compute the expected rank we use $g(j) = \frac{(d_{avg})^je^{-d_{avg}}}{j!}$,
\begin{align*}
E[R_{G}(u)] &=  n \cdot \sum_{j=d_{u}+1}^{d_{max}} \frac{(d_{avg})^je^{-d_{avg}}}{j!} +1\\
E[R_{G}(u)] &= n \cdot e^{-d_{avg}}\sum_{j=d_u+1}^{d_{max}}\frac{(d_{avg})^j}{j!} +1,
\end{align*}
as desired.~\end{proof}

\begin{corollary}
In a random network $G$, the degree rank of a node $u$ can be estimated as,\\ $R_{est}(u)= n \cdot e^{-d'_{avg}}\sum_{j=d_u+1}^{d'_{max}}\frac{(d'_{avg})^{j}}{j!} +1$, where $d'_{max}$ and $d'_{avg}$ are estimated maximum and average degree of the network respectively.
\end{corollary}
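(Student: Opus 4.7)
The plan is to obtain the corollary as a direct consequence of the preceding lemma, the only real content being the passage from an expected value expressed in terms of true parameters to an estimator expressed in terms of quantities that can actually be computed from a sample. First I would restate the lemma's output, $E[R_{G}(u)] = n \cdot e^{-d_{avg}} \sum_{j=d_u+1}^{d_{max}} \frac{(d_{avg})^j}{j!} + 1$, and identify which quantities are unavailable in the sampling setting, namely $n$, $d_{avg}$, and $d_{max}$.

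Next, following the template used earlier for Corollary~\ref{estimated} (and implicit in the derivation leading to equation~\eqref{eq:es}), I would pass from $E[R_G(u)]$ to $R_{act}(u)$ via the approximation $R_{act}(u) \approx E[R_G(u)]$. This is the standard concentration argument: for each $j$, the count $N_j$ behaves like a $\mathrm{Bin}(n, g(j))$ variable, so its relative fluctuation is $O(1/\sqrt{n \cdot g(j)})$, and summing over $j > d_u$ gives a rank whose standard deviation is small compared with its mean once $n$ is large. With the approximation in hand, substituting the estimators $n'$, $d'_{avg}$, and $d'_{max}$ produced by the preprocessing routines of Sections~3.1--3.3 yields the stated formula.

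The only step that is not completely routine is the truncation of the sum at $d'_{max}$ in place of the true $d_{max}$. Here I would appeal to the rapid decay of the Poisson tail: because $\sum_{j > d'_{max}} (d_{avg})^j / j!$ decays super-exponentially once $d'_{max}$ exceeds $d_{avg}$ by a modest margin, the missing mass is negligible, and any reasonable sample will satisfy this condition since the sampled maximum concentrates near the true maximum for Poisson degrees. I expect this to be the main (and in fact only) point requiring a brief justification in the written proof; the rest is a direct substitution into the lemma.
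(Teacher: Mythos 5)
Your proposal is correct and matches the paper's treatment: the paper states this corollary with no proof at all, treating it exactly as you do --- a direct substitution of the estimated parameters into the formula of the preceding lemma, justified by $R_{act}(u) \approx E[R_G(u)]$ as in the earlier power-law case. Your added remarks on concentration of the $N_j$ and the Poisson tail truncation at $d'_{max}$ are sound but go beyond anything the paper supplies.
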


\subsection{Discussion}

The proposed methods (PD, US, MH, RW) are verified on the generated ER networks, and their details are given in table~\ref{summ}. Figure~\ref{ernw} shows percentage average absolute error and average weighted error using actual and estimated parameters.
In PD method, the error computed using estimated parameters is very high as the rank is highly dependent on the average degree. The number of nodes for each degree $j$ is directly proportional to $(d_{avg})^j$, so, a small estimation error leads to more cumulative error. 
Rest of the results are similar to scale-free networks. The average error of RW method is very close to US method, and it can be efficiently used for large size random networks. The performance of RW method improves with network size. It is also observed that the estimation error decreases as the network size increases. In random networks, absolute error versus degree rank shows a different behavior due to the poisson degree distribution. Figure~\ref{abserrer} shows that the estimation error first increases with the rank and then decreases.

\begin{figure*}[]
  \centering
  \subcaptionbox{Absolute Error}[1.0\linewidth][c]{%
    \includegraphics[width=1.0\linewidth]{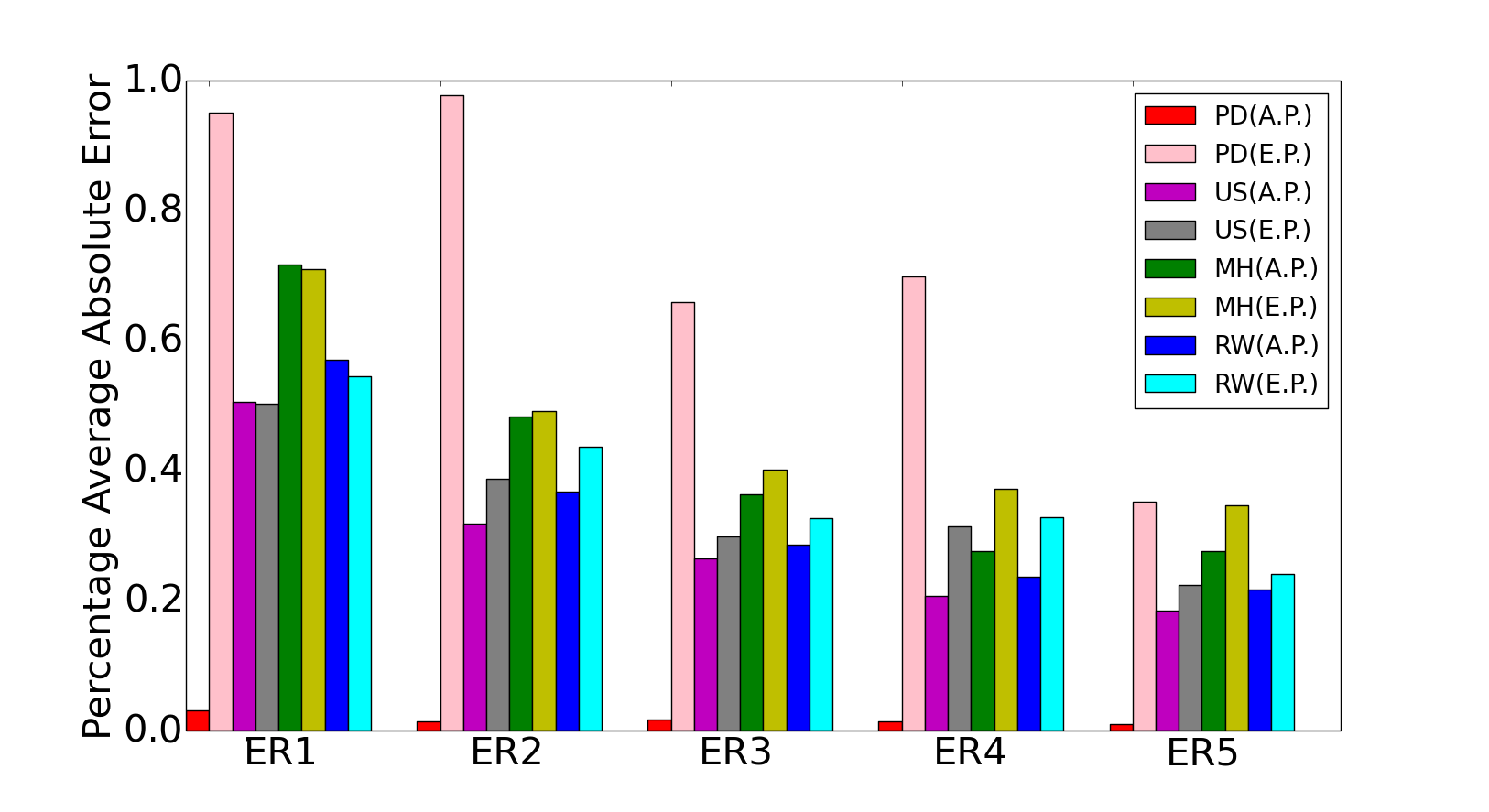}}\quad
  \subcaptionbox{Weighted Error}[1.0\linewidth][c]{%
    \includegraphics[width=1.0\linewidth]{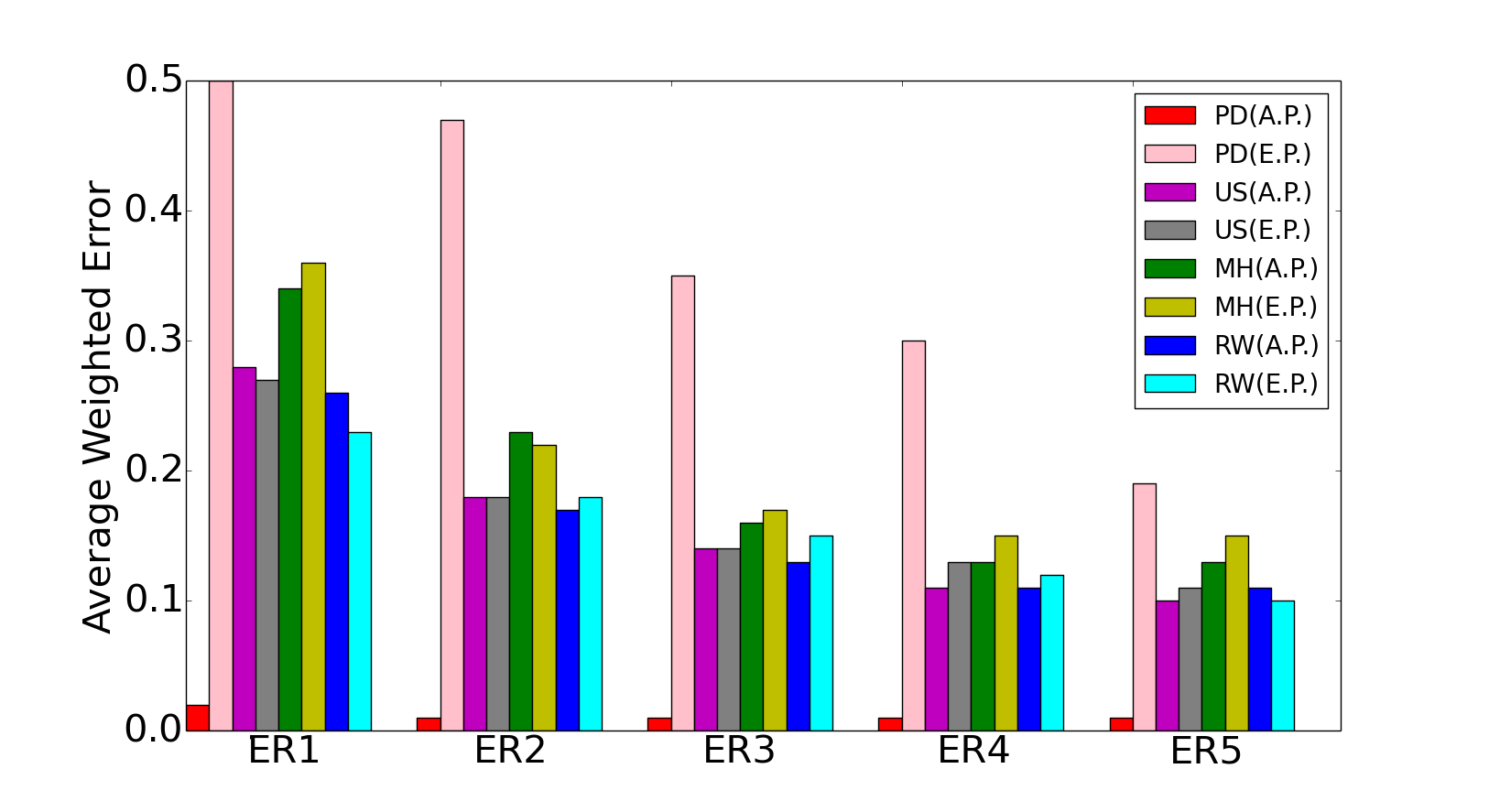}}
  \caption{Average Estimation Error for ER Networks}
  \label{ernw}
\end{figure*}

\begin{figure}[]
\centering
\includegraphics[width=12cm]{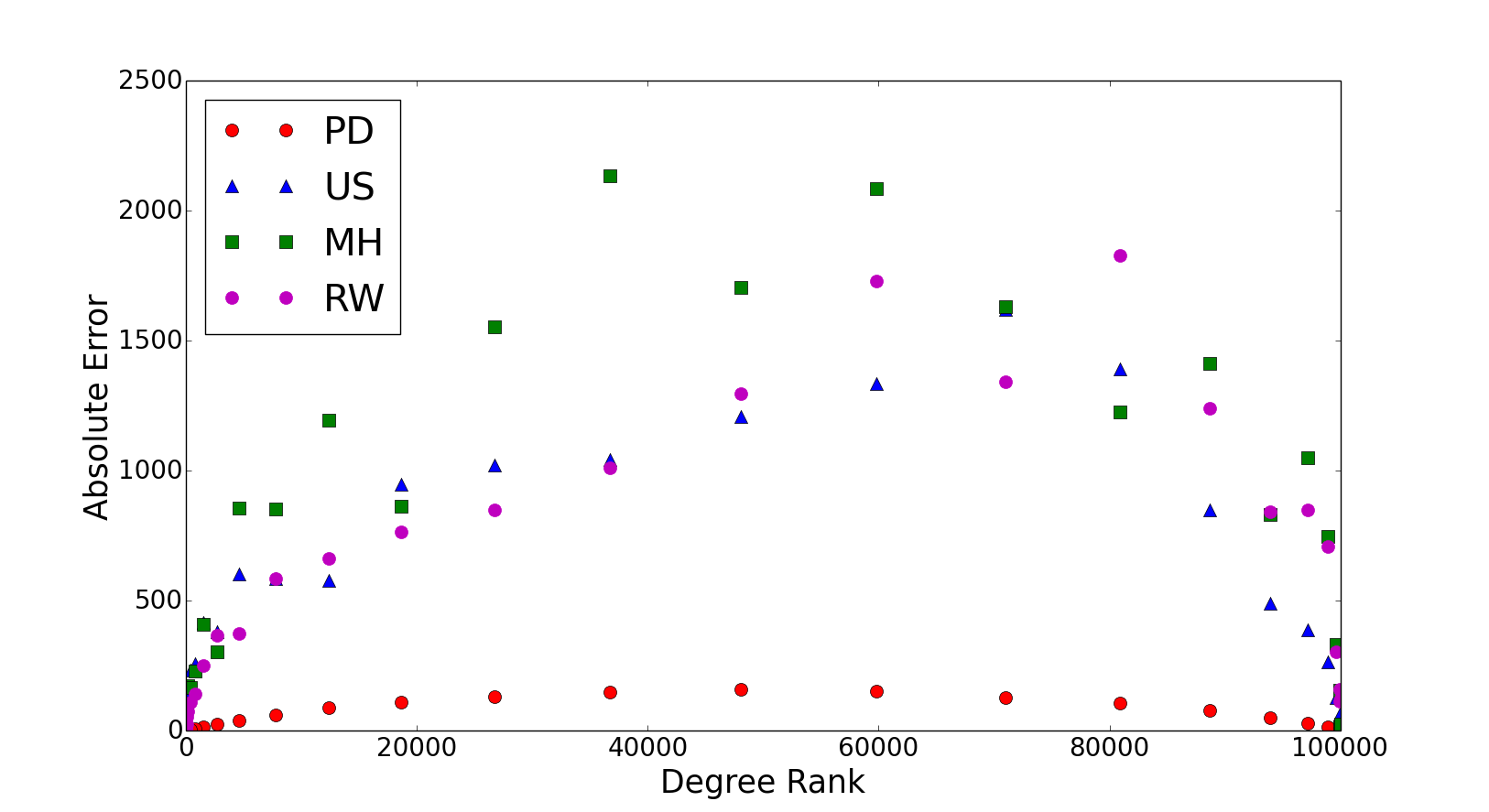}
\caption{Absolute Estimation Error versus Degree Rank for ER1 Network}
\label{abserrer}
\end{figure}

\section{Conclusion}

In this work, we have proposed four methods to estimate degree rank of a node without having the entire structure of the network. With time, the size of real world dynamic networks is increasing very fast. It is not feasible to collect the entire network to study its global properties. The proposed methods collect a small sample set using random walk or its variations and estimate global degree rank of the node. 

The accuracy of the proposed methods is evaluated using absolute and weighted error functions. It is observed that the accuracy of RW method is very close to US method. RW method is the most feasible and accurate method for real world networks. In RW method, percentage average absolute error is $0.16\%$ and average weighted error is $0.13\%$. All proposed methods estimate the rank of higher degree nodes more accurately than the lower degree nodes. These methods are further extended to random networks. Results show that they can be efficiently used to estimate degree rank in random networks.


One can further extend this to estimate the rank of a node based on other centrality measures like closeness centrality, betweenness centrality, katz centrality, pagerank, coreness, and so on. The complexity to compute these global centrality measures is very high. So, the local algorithms to compute the global rank of the nodes will be of great interest. 


\bibliographystyle{elsarticle-num}

\bibliography{/home/akrati/latex/mybib}

\begin{thebibliography}{10}
\expandafter\ifx\csname url\endcsname\relax
  \def\url#1{\texttt{#1}}\fi
\expandafter\ifx\csname urlprefix\endcsname\relax\def\urlprefix{URL }\fi
\expandafter\ifx\csname href\endcsname\relax
  \def\href#1#2{#2} \def\path#1{#1}\fi

\bibitem{shaw1954some}
M.~E. Shaw, Some effects of unequal distribution of information upon group
  performance in various communication nets, Journal of abnormal and social
  psychology 49~(4) (1954) 547--553.

\bibitem{chen2012identifying}
D.~Chen, L.~L{\"u}, M.-S. Shang, Y.-C. Zhang, T.~Zhou, Identifying influential
  nodes in complex networks, Physica a: Statistical mechanics and its
  applications 391~(4) (2012) 1777--1787.

\bibitem{sabidussi1966centrality}
G.~Sabidussi, The centrality index of a graph, Psychometrika 31~(4) (1966)
  581--603.

\bibitem{freeman1977set}
L.~C. Freeman, A set of measures of centrality based on betweenness, Sociometry
  (1977) 35--41.

\bibitem{stephenson1989rethinking}
K.~Stephenson, M.~Zelen, Rethinking centrality: Methods and examples, Social
  Networks 11~(1) (1989) 1--37.

\bibitem{katz1953new}
L.~Katz, A new status index derived from sociometric analysis, Psychometrika
  18~(1) (1953) 39--43.

\bibitem{brin1998anatomy}
S.~Brin, L.~Page, The anatomy of a large-scale hypertextual web search engine
  in: Seventh international world-wide web conference (www 1998), april 14-18,
  1998, brisbane, australia, Brisbane, Australia.

\bibitem{barabasi1999emergence}
A.-L. Barab{\'a}si, R.~Albert, Emergence of scaling in random networks, science
  286~(5439) (1999) 509--512.

\bibitem{hu2009simulation}
B.~Hu, J.~Gong, Simulation of epidemic spread in social network, in: Management
  and Service Science, 2009. MASS'09. International Conference on, IEEE, 2009,
  pp. 1--4.

\bibitem{leskovec2007dynamics}
J.~Leskovec, L.~A. Adamic, B.~A. Huberman, The dynamics of viral marketing, ACM
  Transactions on the Web (TWEB) 1~(1) (2007) 5.

\bibitem{gupta2016modeling}
Y.~Gupta, A.~Saxena, D.~Das, S.~Iyengar, Modeling memetics using edge
  diversity, in: Complex Networks VII, Springer, 2016, pp. 187--198.

\bibitem{saxena2015understanding}
A.~Saxena, S.~Iyengar, Y.~Gupta, Understanding spreading patterns on social
  networks based on network topology, in: Advances in Social Networks Analysis
  and Mining (ASONAM), 2015 IEEE/ACM International Conference on, IEEE, 2015,
  pp. 1616--1617.

\bibitem{watts2007influentials}
D.~J. Watts, P.~S. Dodds, Influentials, networks, and public opinion formation,
  Journal of consumer research 34~(4) (2007) 441--458.

\bibitem{hou2012identifying}
B.~Hou, Y.~Yao, D.~Liao, {Identifying all-around nodes for spreading dynamics
  in complex networks}, Physica A: Statistical Mechanics and its Applications
  391~(15) (2012) 4012--4017.

\bibitem{yu2015node}
Y.~Yu, S.~Fan, Node importance measurement based on the degree and closeness
  centrality, Journal of Information \& Commputational Science, pp-1281-1291.

\bibitem{fortunato2006approximating}
S.~Fortunato, M.~Bogu{\~n}{\'a}, A.~Flammini, F.~Menczer, Approximating
  pagerank from in-degree, in: Algorithms and models for the web-graph,
  Springer, 2006, pp. 59--71.

\bibitem{ghoshal2011ranking}
G.~Ghoshal, A.-L. Barab{\'{a}}si, {Ranking stability and super-stable nodes in
  complex networks}, Nature communications 2 (2011) 394.

\bibitem{saxena2017afaster}
A.~Saxena, R.~Gera, S.~Iyengar, A faster method to estimate closeness
  centrality ranking, arXiv preprint arXiv:1706.02083.

\bibitem{leskovec2006sampling}
J.~Leskovec, C.~Faloutsos, Sampling from large graphs, in: Proceedings of the
  12th ACM SIGKDD international conference on Knowledge discovery and data
  mining, ACM, 2006, pp. 631--636.

\bibitem{haralabopoulos2014real}
G.~Haralabopoulos, I.~Anagnostopoulos, Real time enhanced random sampling of
  online social networks, Journal of Network and Computer Applications 41
  (2014) 126--134.

\bibitem{even2011graph}
S.~Even, Graph algorithms, Cambridge University Press, 2011.

\bibitem{goodman1961snowball}
L.~A. Goodman, Snowball sampling, The annals of mathematical statistics (1961)
  148--170.

\bibitem{lovasz1993random}
L.~Lov{\'a}sz, Random walks on graphs: A survey, Combinatorics, Paul erdos is
  eighty 2~(1) (1993) 1--46.

\bibitem{metropolis1953equation}
N.~Metropolis, A.~W. Rosenbluth, M.~N. Rosenbluth, A.~H. Teller, E.~Teller,
  Equation of state calculations by fast computing machines, The journal of
  chemical physics 21~(6) (1953) 1087--1092.

\bibitem{hansen1943theory}
M.~H. Hansen, W.~N. Hurwitz, On the theory of sampling from finite populations,
  The Annals of Mathematical Statistics 14~(4) (1943) 333--362.

\bibitem{salganik2004sampling}
M.~J. Salganik, D.~D. Heckathorn, Sampling and estimation in hidden populations
  using respondent-driven sampling, Sociological methodology 34~(1) (2004)
  193--240.

\bibitem{backstrom2011supervised}
L.~Backstrom, J.~Leskovec, Supervised random walks: predicting and recommending
  links in social networks, in: Proceedings of the fourth ACM international
  conference on Web search and data mining, ACM, 2011, pp. 635--644.

\bibitem{zhou2016faster}
Z.~Zhou, N.~Zhang, Z.~Gong, G.~Das, Faster random walks by rewiring online
  social networks on-the-fly, ACM Transactions on Database Systems (TODS)
  40~(4) (2016) 26.

\bibitem{nazi2015walk}
A.~Nazi, Z.~Zhou, S.~Thirumuruganathan, N.~Zhang, G.~Das, Walk, not wait:
  Faster sampling over online social networks, Proceedings of the VLDB
  Endowment 8~(6) (2015) 678--689.

\bibitem{ribeiro2010estimating}
B.~Ribeiro, D.~Towsley, Estimating and sampling graphs with multidimensional
  random walks, in: Proceedings of the 10th ACM SIGCOMM conference on Internet
  measurement, ACM, 2010, pp. 390--403.

\bibitem{voudigari2016rank}
E.~Voudigari, N.~Salamanos, T.~Papageorgiou, E.~J. Yannakoudakis, Rank degree:
  An efficient algorithm for graph sampling, in: Advances in Social Networks
  Analysis and Mining (ASONAM), 2016 IEEE/ACM International Conference on,
  IEEE, 2016, pp. 120--129.

\bibitem{davis2016marginal}
B.~Davis, R.~Gera, G.~Lazzaro, B.~Y. Lim, E.~C. Rye, The marginal benefit of
  monitor placement on networks, in: Complex Networks VII, Springer, 2016, pp.
  93--104.

\bibitem{kurant2012graph}
M.~Kurant, C.~T. Butts, A.~Markopoulou, Graph size estimation, arXiv preprint
  arXiv:1210.0460.

\bibitem{hardiman2013estimating}
S.~J. Hardiman, L.~Katzir, Estimating clustering coefficients and size of
  social networks via random walk, in: Proceedings of the 22nd international
  conference on World Wide Web, International World Wide Web Conferences
  Steering Committee, 2013, pp. 539--550.

\bibitem{cem2016estimation}
E.~Cem, K.~Sarac, Estimation of structural properties of online social networks
  at the extreme, Computer Networks 108 (2016) 323--344.

\bibitem{musco2016ant}
C.~Musco, H.-H. Su, N.~Lynch, Ant-inspired density estimation via random walks,
  in: Proceedings of the 2016 ACM Symposium on Principles of Distributed
  Computing, ACM, 2016, pp. 469--478.

\bibitem{lucchese2015networks}
R.~Lucchese, D.~Varagnolo, Networks cardinality estimation using order
  statistics, in: American Control Conference (ACC), 2015, IEEE, 2015, pp.
  3810--3817.

\bibitem{chen2016estimating}
L.~Chen, A.~Karbasi, F.~W. Crawford, Estimating the size of a large network and
  its communities from a random sample, in: Advances in Neural Information
  Processing Systems, 2016, pp. 3072--3080.

\bibitem{ye2011estimating}
S.~Ye, S.~F. Wu, Estimating the size of online social networks, International
  Journal of Social Computing and Cyber-Physical Systems 1~(2) (2011) 160--179.

\bibitem{cooper2012fast}
C.~Cooper, T.~Radzik, Y.~Siantos, A fast algorithm to find all high degree
  vertices in power law graphs, in: Proceedings of the 21st International
  Conference on World Wide Web, ACM, 2012, pp. 1007--1016.

\bibitem{marchetti1988estimate}
A.~Marchetti-Spaccamela, On the estimate of the size of a directed graph, in:
  International Workshop on Graph-Theoretic Concepts in Computer Science,
  Springer, 1988, pp. 317--326.

\bibitem{dasgupta2014estimating}
A.~Dasgupta, R.~Kumar, T.~Sarlos, On estimating the average degree, in:
  Proceedings of the 23rd international conference on World wide web, ACM,
  2014, pp. 795--806.

\bibitem{eden2016sublinear}
T.~Eden, D.~Ron, C.~Seshadhri, Sublinear time estimation of degree distribution
  moments: The arboricity connection, arXiv preprint arXiv:1604.03661.

\bibitem{lu2012sampling}
J.~Lu, D.~Li, Sampling online social networks by random walk, in: Proceedings
  of the First ACM International Workshop on Hot Topics on Interdisciplinary
  Social Networks Research, ACM, 2012, pp. 33--40.

\bibitem{cem2015estimating}
E.~Cem, K.~Sarac, Estimating the size and average degree of online social
  networks at the extreme, in: Communications (ICC), 2015 IEEE International
  Conference on, IEEE, 2015, pp. 1268--1273.

\bibitem{cem2016average}
E.~Cem, K.~Sarac, Average degree estimation under ego-centric sampling design,
  in: Computer Communications Workshops (INFOCOM WKSHPS), 2016 IEEE Conference
  on, IEEE, 2016, pp. 152--157.

\bibitem{ribeiro2012estimation}
B.~Ribeiro, D.~Towsley, On the estimation accuracy of degree distributions from
  graph sampling, in: Decision and Control (CDC), 2012 IEEE 51st Annual
  Conference on, IEEE, 2012, pp. 5240--5247.

\bibitem{ribeiro2012sampling}
B.~Ribeiro, P.~Wang, F.~Murai, D.~Towsley, Sampling directed graphs with random
  walks, in: INFOCOM, 2012 Proceedings IEEE, IEEE, 2012, pp. 1692--1700.

\bibitem{saxena2015rank}
A.~Saxena, V.~Malik, S.~Iyengar, Rank me thou shalln't compare me, arXiv
  preprint arXiv:1511.09050.

\bibitem{saxena2016estimating}
A.~Saxena, V.~Malik, S.~Iyengar, Estimating the degree centrality ranking, in:
  2016 8th International Conference on Communication Systems and Networks
  (COMSNETS), IEEE, 2016, pp. 1--2.

\bibitem{saxena2015estimating}
A.~Saxena, V.~Malik, S.~Iyengar, Estimating the degree centrality ranking of a
  node, arXiv preprint arXiv:1511.05732.

\bibitem{gjoka2010walking}
M.~Gjoka, M.~Kurant, C.~T. Butts, A.~Markopoulou, {Walking in Facebook: A case
  study of unbiased sampling of OSNs}, in: INFOCOM, 2010 Proceedings IEEE,
  IEEE, 2010, pp. 1--9.

\bibitem{nr}
R.~A. Rossi, N.~K. Ahmed, \href{http://networkrepository.com}{The network data
  repository with interactive graph analytics and visualization}, in:
  Proceedings of the Twenty-Ninth AAAI Conference on Artificial Intelligence,
  2015.
\newline\urlprefix\url{http://networkrepository.com}

\bibitem{Fire2011}
M.~Fire, L.~Tenenboim, O.~Lesser, R.~Puzis, L.~Rokach, Y.~Elovici, Link
  prediction in social networks using computationally efficient topological
  features, in: IEEE Third International Confernece on Social Computing
  (SocialCom), IEEE, 2011, pp. 73--80.

\bibitem{catster}
\href{http://konect.uni-koblenz.de/networks/petster-friendships-cat}{Catster
  friendships network dataset -- {KONECT}} (Oct. 2016).
\newline\urlprefix\url{http://konect.uni-koblenz.de/networks/petster-friendships-cat}

\bibitem{yang2015defining}
J.~Yang, J.~Leskovec, Defining and evaluating network communities based on
  ground-truth, Knowledge and Information Systems 42~(1) (2015) 181--213.

\bibitem{zafarani2014users}
R.~Zafarani, H.~Liu, Users joining multiple sites: Distributions and patterns.

\bibitem{hogg2012social}
T.~Hogg, K.~Lerman, Social dynamics of digg, EPJ Data Science 1~(1) (2012)
  1--26.

\bibitem{leskovec2007graph}
J.~Leskovec, J.~Kleinberg, C.~Faloutsos, Graph evolution: Densification and
  shrinking diameters, ACM Transactions on Knowledge Discovery from Data (TKDD)
  1~(1) (2007) 2.

\bibitem{Traud_2011fs}
A.~L. Traud, E.~D. Kelsic, P.~J. Mucha, M.~A. Porter, Comparing community
  structure to characteristics in online collegiate social networks, SIAM Rev.
  53~(3) (2011) 526--543.

\bibitem{traud2012social}
A.~L. Traud, P.~J. Mucha, M.~A. Porter, Social structure of {F}acebook
  networks, Phys. A 391~(16) (2012) 4165--4180.

\bibitem{zafarani2009social}
R.~Zafarani, H.~Liu, \href{http://socialcomputing.asu.edu}{Social computing
  data repository at {ASU}} (2009).
\newline\urlprefix\url{http://socialcomputing.asu.edu}

\bibitem{cho2011friendship}
E.~Cho, S.~A. Myers, J.~Leskovec, Friendship and mobility: user movement in
  location-based social networks, in: Proceedings of the 17th ACM SIGKDD
  international conference on Knowledge discovery and data mining, ACM, 2011,
  pp. 1082--1090.

\bibitem{mcauley2012learning}
J.~J. McAuley, J.~Leskovec, Learning to discover social circles in ego
  networks., in: NIPS, Vol. 2012, 2012, pp. 548--56.

\bibitem{BoVWFI}
P.~Boldi, S.~Vigna, The {W}eb{G}raph framework {I}: {C}ompression techniques,
  in: Proc. of the Thirteenth International World Wide Web Conference (WWW
  2004), ACM Press, Manhattan, USA, 2004, pp. 595--601.

\bibitem{konstas2009social}
I.~Konstas, V.~Stathopoulos, J.~M. Jose, On social networks and collaborative
  recommendation, in: Proceedings of the 32nd international ACM SIGIR
  conference on Research and development in information retrieval, ACM, 2009,
  pp. 195--202.

\bibitem{ZafaraniLiu2009}
R.~Zafarani, H.~Liu, \href{http://socialcomputing.asu.edu}{Social computing
  data repository at {ASU}} (2009).
\newline\urlprefix\url{http://socialcomputing.asu.edu}

\bibitem{erd6s1960evolution}
P.~Erd\H{o}s, A.~R{\'e}nyi, On the evolution of random graphs, Publ. Math.
  Inst. Hungar. Acad. Sci 5 (1960) 17--61.

\bibitem{de2009social}
M.~De~Choudhury, H.~Sundaram, A.~John, D.~D. Seligmann, Social synchrony:
  Predicting mimicry of user actions in online social media, in: Computational
  Science and Engineering, 2009. CSE'09. International Conference on, Vol.~4,
  IEEE, 2009, pp. 151--158.

\end{thebibliography}

\section*{Appendix}
\begin{subappendices}
\renewcommand{\thesection}{\Alph{section}}%

\section{Results on real world scale-free networks}\label{appendix1}

\begin{landscape}
\begin{table}[htp]
\centering
\caption{Absolute and Weighted Error in the Estimated Ranking using all Methods on Real World Social Networks}
\label{my-label1}
\begin{tabular}{|l|l|l|l|l|l|l|l|l|l|l|l|l|l|}
\hline
Network & Type & Ref & Nodes & Edges & Avg Deg & \multicolumn{2}{|c|}{PL Error} & \multicolumn{2}{|c|}{US Error} & \multicolumn{2}{|c|}{MH Error} & \multicolumn{2}{|c|}{RW Error} \\ \hline

& & & & &  & Abs & Wt &  Abs & Wt & Abs & Wt & Abs & Wt \\ \hline

Friendster & Social & \cite{ nr} & 5689498 & 14067887 & 4.95 & 0.06 & 0.06 & 0.01 & 0.01 & 0.05 & 0.05 & 0.01 & 0.01 \\ \hline
Academia & Social & \cite{Fire2011} & 200167 & 1022440 & 10.22 & 1.46 & 1.01 & 0.15 & 0.14 & 0.47 & 0.35 & 0.19 & 0.14 \\ \hline
Dogster & Social & \cite{ nr} & 426485 & 8543321 & 40.06 & 1.27 & 0.95 & 0.1 & 0.09 & 0.34 & 0.28 & 0.07 & 0.06 \\ \hline
Facebook1 & Social & \cite{traud2012social } & 3097165 & 23667394 & 15.28 & 01.07 & 0.78 & 0.03 & 0.02 & 0.19 & 0.17 & 0.05 & 0.04 \\ \hline
Gowalla & Social & \cite{ cho2011friendship} & 196591 & 950327 & 9.67 & 01.05 & 0.66 & 0.12 & 0.11 & 0.5 & 0.41 & 0.12 & 0.1 \\ \hline
Hyves & Social & \cite{ zafarani2009social} & 1402673 & 2777419 & 3.96 & 0.2 & 0.15 & 0.02 & 0.02 & 0.07 & 0.07 & 0.01 & 0.01 \\ \hline
Foursquare & Social & \cite{ zafarani2009social} & 639014 & 3214985 & 10.06 & 1.1 & 0.95 & 0.08 & 0.07 & 0.52 & 0.45 & 0.06 & 0.05 \\ \hline
Last.fm & Social & \cite{ konstas2009social} & 1191805 & 4519330 & 7.58 & 0.24 & 0.21 & 0.03 & 0.03 & 0.09 & 0.08 & 0.01 & 0.01 \\ \hline
Livemocha & Social & \cite{ ZafaraniLiu2009} & 104103 & 2193082 & 42.13 & 2.96 & 2.26 & 0.42 & 0.38 & 1.01 & 0.78 & 0.31 & 0.24 \\ \hline
Delicious & Social & \cite{nr } & 536108 & 1365961 & 5.1 & 0.31 & 0.25 & 0.05 & 0.05 & 0.17 & 0.15 & 0.05 & 0.04 \\ \hline
Douban & Social & \cite{nr } & 154908 & 327162 & 4.22 & 1.35 & 1.16 & 0.19 & 0.18 & 0.41 & 0.38 & 0.13 & 0.12 \\ \hline
Actor & Collaboration & \cite{barabasi1999emergence} & 374511 & 15014839 & 80.18 & 3.48 & 2.31 & 0.15 & 0.14 & 0.51 & 0.41 & 0.27 & 0.21 \\ \hline
DBLP & Collaboration & \cite{yang2015defining } & 317080 & 1049866 & 6.62 & 1.25 & 01.09 & 0.12 & 0.11 & 0.43 & 0.34 & 0.21 & 0.16 \\ \hline
Digg & Social & \cite{de2009social } & 261489 & 1536577 & 11.75 & 0.59 & 0.54 & 0.12 & 0.12 & 0.42 & 0.39 & 0.1 & 0.09 \\ \hline
Eu-Email & Communication & \cite{leskovec2007graph } & 224832 & 339925 & 3.02 & 0.42 & 0.39 & 0.08 & 0.08 & 0.21 & 0.2 & 0.04 & 0.04 \\ \hline
Gplus & Social & \cite{ mcauley2012learning} & 107614 & 12238285 & 227.45 & 5.93 & 4.64 & 0.46 & 0.41 & 21.08 & 2.34 & 1.09 & 01.07 \\ \hline
Catster & Social & \cite{ nr} & 148826 & 5447464 & 73.21 & 2.43 & 11.02 & 0.25 & 0.23 & 1.1 & 0.9 & 0.25 & 0.2 \\ \hline
Youtube & Social & \cite{ zafarani2009social} & 1134885 & 2987623 & 5.27 & 0.14 & 0.11 & 0.03 & 0.03 & 0.1 & 0.09 & 0.02 & 0.02 \\ \hline
Pokec & Social & \cite{ nr} & 1632803 & 22301964 & 27.32 & 2.74 & 1.78 & 0.05 & 0.04 & 0.14 & 0.1 & 0.07 & 0.05 \\ \hline
Hollywood & Collaboration & \cite{BoVWFI} & 1069126 & 56306653 & 105.33 & 2.54 & 1.91 & 0.08 & 0.07 & 0.3 & 0.26 & 0.15 & 0.12 \\ \hline

Summary &  &  &  &  &  & 1.51 & 1.14 & 0.13 & 0.12 & 0.50 & 0.41 & 0.16 & 0.13 \\ \hline

\end{tabular}
\end{table}
\end{landscape}

\end{subappendices}

\end{document}